\DeclareMathOperator{\tr}{tr}
\DeclareMathOperator{\E}{\mathbb{E}}
\DeclareMathOperator{\pr}{Pr}
\newcommand{\norm}[1]{\left\lVert#1\right\rVert}
\newcommand{\nocontentsline}[3]{}
\newcommand{\tocless}[2]{\bgroup\let\addcontentsline=\nocontentsline#1{#2}\egroup}
\newtheorem{corollary}{Corollary}[]
\begin{document}

\title{\Large Efficient online quantum state estimation using a matrix-exponentiated gradient method}
\author{Akram Youssry}
\affiliation{University of Technology Sydney,
		Centre for Quantum Software and Information,
		Ultimo NSW 2007, Australia}
\affiliation{Department of Electronics and Communication Engineering, Faculty of Engineering, Ain Shams University, Cairo, Egypt} 

\author{Christopher Ferrie}
\affiliation{University of Technology Sydney,
		Centre for Quantum Software and Information,
		Ultimo NSW 2007, Australia}

\author{Marco Tomamichel}
\affiliation{University of Technology Sydney,
		Centre for Quantum Software and Information,
		Ultimo NSW 2007, Australia}

\date{\today}

\begin{abstract}

In this paper, we explore an efficient online algorithm for quantum state estimation based on a matrix-exponentiated gradient method previously used in the context of machine learning. The state update is governed by a learning rate that determines how much weight is given to the new measurement results obtained in each step. We show convergence of the running state estimate in probability to the true state for both noiseless and noisy measurements. We find that in the latter case the learning rate has to be chosen adaptively and decreasing to guarantee convergence beyond the noise threshold. 
As a practical alternative we then propose to use running averages of the measurement statistics and a constant learning rate to overcome the noise problem. The proposed algorithm is numerically compared with batch maximum-likelihood and least-squares estimators. The results show a superior performance of the new algorithm in terms of accuracy and runtime complexity.
\end{abstract}

\maketitle
\tableofcontents

\section{Introduction}
The field of quantum information processing has grown rapidly over the past decade, largely motivated by the wide range of prospective applications of quantum computing, quantum cryptography, and quantum communications. However, building scalable quantum devices is still an enormous challenge. A core unsolved problem is the efficient characterization of quantum systems of intermediate size---can we check efficiently whether a quantum device comprised of a few qubits performs as intended? Practical considerations and, in particular, efficiency of the estimation procedure are at the forefront as quantum systems move beyond the curiosity of experimental physics to prototype quantum technology devices. 

The most fundamental characterization problem concerns state estimation---determining an unknown state of a quantum system using a series of different measurements. This procedure is referred to as quantum state estimation or quantum state tomography. Quantum state estimation usually refers to estimating the state using incomplete information, whereas quantum state tomography is often used to describe the situation where complete (and sometimes even noise-free) information about the state is assumed. The two terms can be used interchangeably, though we stick to the former throughout the paper. The literature on quantum state estimation is extensive (see, e.g. the survey text \cite{QSTbook}) with methods ranging from simple linear inversion to least-squares (LS) regression \cite{qi_quantum_2013}, maximum likelihood (ML) estimation \cite{hradil_quantum-state_1997, rehacek_diluted_2007}, methods based on compressed sensing \cite{gross_quantum_2010, flammia_quantum_2012}, and the Bayesian approach (see, e.g., \cite{granade_practical_2016}). The maximum likelihood method is considered optimal in the sense that it yields a valid state that maximizes the probability of the observed data, and converges to the true state in the limit of many measurements. A disadvantage of the method is that it often yields estimates at the boundary of the state set, i.e.\ states that are rank deficient.

Gradient-based approximation methods \cite{bolduc_projected_2017, shang_superfast_2017}, promise to be much faster but they can produce non-physical states (with the estimate either having negative eigenvalues, or being unnormalized) and convergence is in many cases not guaranteed. The former problem can be solved in practice by projecting the state back into the physical space~\cite{smolin_efficient_2012}. The same problem is also present in linear regression methods. The matrix exponentiated gradient (MEG) method has found use in classical machine learning \cite{tsuda_matrix_2005,globerson_exponentiated_2007} and offers an appealing alternative as it by construction ensures positive semidefiniteness of the matrix estimate. In \cite{li_general_2017}, MEG was applied to perform quantum tomography on qubits and approximate the maximum likelihood estimate efficiently. In this paper, we chose MEG among other online estimation methods as we can show strong convergence results. Other efficient methods such as projected-gradients would be also interesting to explore, but this is outside the scope of this paper.

In this work, we use the MEG technique to devise an efficient online estimator for quantum states. Our algorithm satisfies the following three desiderata: (1) it is online---providing a running estimate of the state as data is collected; (2) it is fast---its runtime scales well with the dimension of the system; and (3) it comes with a convergence proof. Many other techniques satisfy some of these properties, but we are not aware of any that satisfy all. 
The main results of our work can be summarized as follows. 
\begin{itemize}[itemsep=0pt,topsep=5pt]
\item We present the MEG algorithm suitable for online quantum state estimation and robust to noise.
\item We prove convergence for noiseless and noisy measurements.
\item We numerically compare one of the proposed algorithms with online versions of ML and LS estimators and find that it converges equally fast.
\item The proposed algorithm is computationally more efficient than other approaches (such as online versions of ML and LS), scaling as $O(d^3)$ instead of $O(d^4)$, where $d$ is the dimension of the quantum system.
\end{itemize}

Our algorithm is naturally \emph{online}, which makes it interesting for many applications. For example, when large amounts of measurements need to be taken to verify a state or when the state is likely to change over time, it can be beneficial to have a running estimate that allows for a rapid diagnosis of error. While any batch algorithm (like the maximum likelihood estimator) can be run on a subset of the the initial data points to create an online estimate, this creates a significant overhead and can be avoided using an online estimator.

\paragraph*{Related work:}
A different perspective on quantum state learning has been taken in \cite{aaronson2007learnability,aaronson2018shadow} where instead of learning a full description of the state the goal is only to predict future measurement outcomes. Concurrent with our work, this approach has also been generalized to the online setting in \cite{aaronson_online_2018}, also using variations of the MEG method. 
The main difference is that their work targets obtaining predictions of future measurement outcomes based on previous ones, which can be achieved without full state tomography. 
The authors show, somewhat surprisingly, that this can be done up to constant error using only a number of measurements linear in the number of qubits. In contrast full characterization requires exponentially many measurements (see, e.g.~\cite{haah2017sample}).
Second, the error criterion to be minimized is based on a mistake bound (i.e. the number of time steps where the prediction was far from the true value), whereas we aim to show asymptotic convergence to the true state. 
A technical consequence of this is that in \cite{aaronson_online_2018} the learning rate can be chosen to be a constant whereas we find that for convergence a decreasing learning rate is necessary.


\section{Summary of main results}

Let us first describe the MEG update rule (see also Section~\ref{megmethod} for more details). We assume that the true state, $\rho$, is finite-dimensional. The update algorithm takes four inputs: $\hat{\rho}_t$ is the estimate of $\rho$ calculated in the previous step; $X_t$ and $\hat{y}_t$ are the observable and measurement outcome at time step $t$; and $\eta_t$ is the learning rate at time step $t$. The algorithm then returns the next estimate of the state, $\hat{\rho}_{t+1}$, as follows.

\begin{algorithm}[H]
	\caption{Matrix-exponentiated gradient update rule for quantum state estimation}
	\label{alg:MEG}
	\begin{algorithmic}
	\Function{Update}{$\hat{\rho}_t$, $X_t$, $\hat{y}_t$, $\eta_t$}
	\State $G_{t+1} \gets \log(\hat{\rho}_t) - 2\eta_t (\tr(\hat{\rho}_t X_t)-\hat{y}_t)X_t$
	\Comment{correct by the gradient of the loss function}
	\State \Return $\hat{\rho}_{t+1} \gets \frac{\exp(G_{t+1})}{\tr\exp(G_{t+1})}$
    \Comment{our next estimate, properly normalized}
	\EndFunction
	\end{algorithmic}
\end{algorithm}

First, we introduce the use of MEG for online quantum state estimation in the ideal case where there is no noise in the measurements. This case may approximate the situation where experimentally a very large number of shots of each measurement are taken. The number of shots refers to the number of copies of the state that are needed to estimate the counts of each possible outcome. So, first the initial estimate is chosen arbitrarily to be the completely mixed state, i.e.\ $\hat{\rho}_1=\frac{1}{d}I_d$. Next, a measurement operator $X_t$ is selected at random, and the noiseless measurement is done to obtain $\hat{y}_t = \tr(\rho X_t)$. In this setting the learning rate is chosen to be any constant such that $0<\eta<\frac{1}{2}$. Finally, the estimate is updated according to the MEG rule as in Algorithm \ref{alg:MEG}. The estimate in this case converges in probability to the true state if the random set of measurements form a unitary one-design, e.g.\ if they are Pauli measurements in the case of one or more qubits. In other words, we show that for all $\delta > 0$,
\begin{align}
	 \lim_{t\to\infty}{\Pr \left\{\norm{\hat{\rho}_t-\rho}_F <\delta\right\}}=1, \label{eq:conv}
\end{align}
where $\| \cdot \|_F $ denotes the Frobenius norm (or any other matrix norm) and the probability is taken over the choice of measurements. In fact, we can show that convergence in Frobenius norm is essentially as fast as $1/\sqrt{t}$ in the following sense. For any $\alpha \in (0, \frac12)$, we have
\begin{align}
	\lim_{T\to\infty} {\Pr {\left\{\norm{\hat{\rho}_t - \rho}_F < \frac{1}{t^{\alpha}}\right\} } }= 1 \,. \label{eq:conv2}
\end{align} 
Here the probability is taken over the measurement choices as well as over $t$ uniformly chosen from the set $\{1, 2, \ldots, T\}$. Essentially this tells us that the probability of a random $t$ exceeding the bound $1/t^{\alpha}$ vanishes, even though we cannot guarantee that the bound is satisfied for any fixed $t$.
The proof of this behavior is presented in Section \ref{sec:proof_noiseless}.

\begin{figure*}[t!]
	\centering
	\subfloat[1-qubit]{\includegraphics[width=0.48\textwidth]{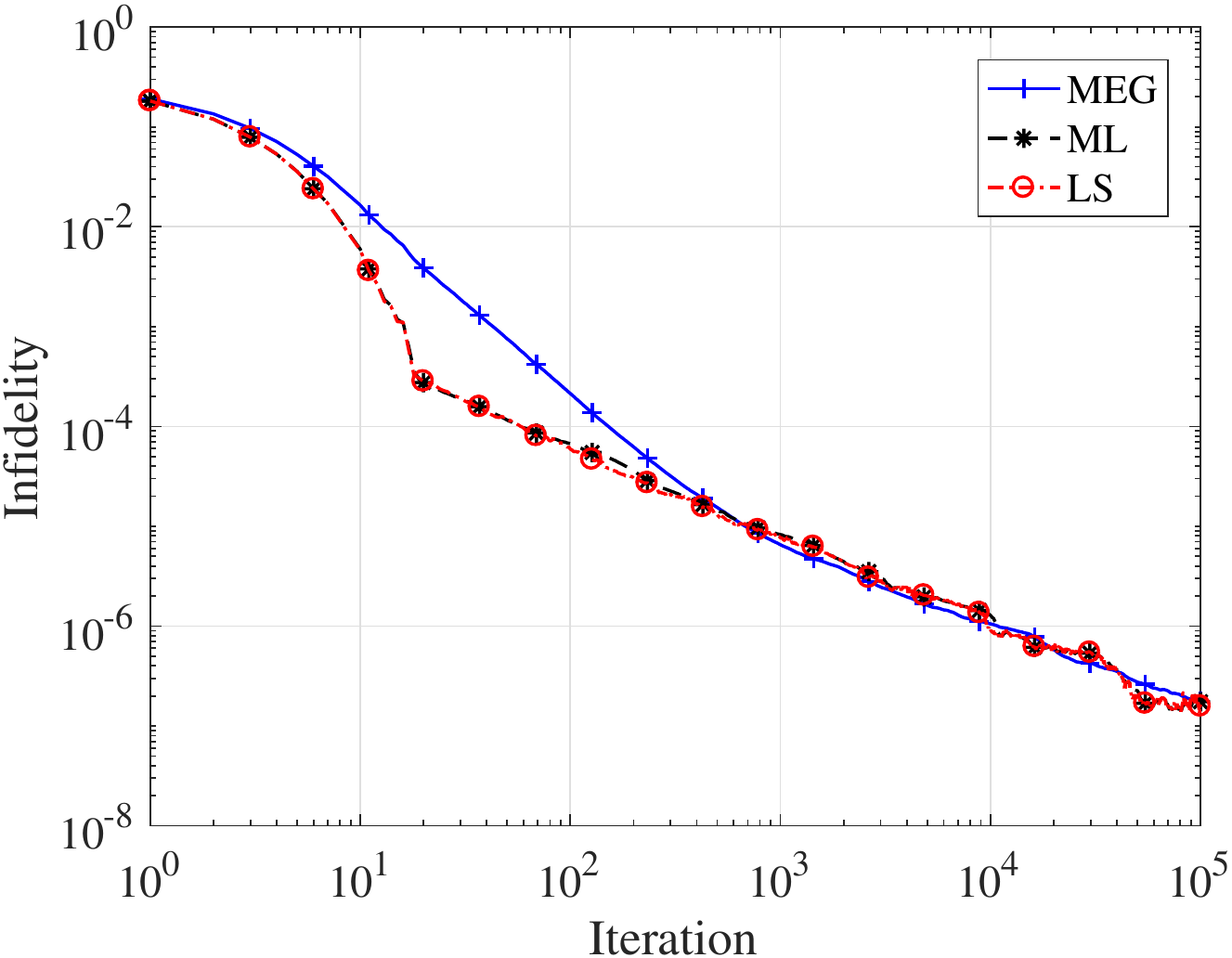}}
	\subfloat[2-qubit]{\includegraphics[width=0.48\textwidth]{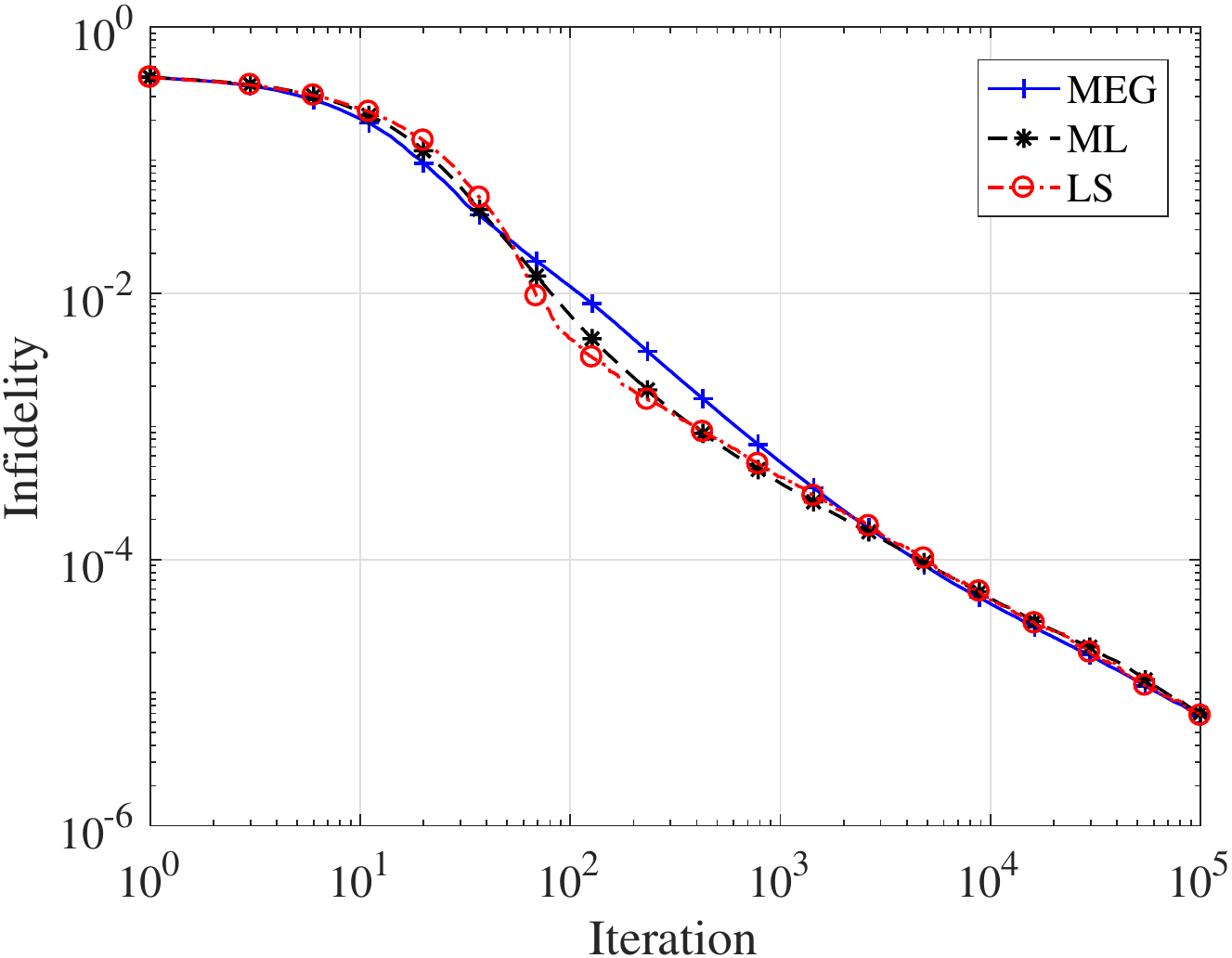}}\\
	\subfloat[3-qubit]{\includegraphics[width=0.48\textwidth]{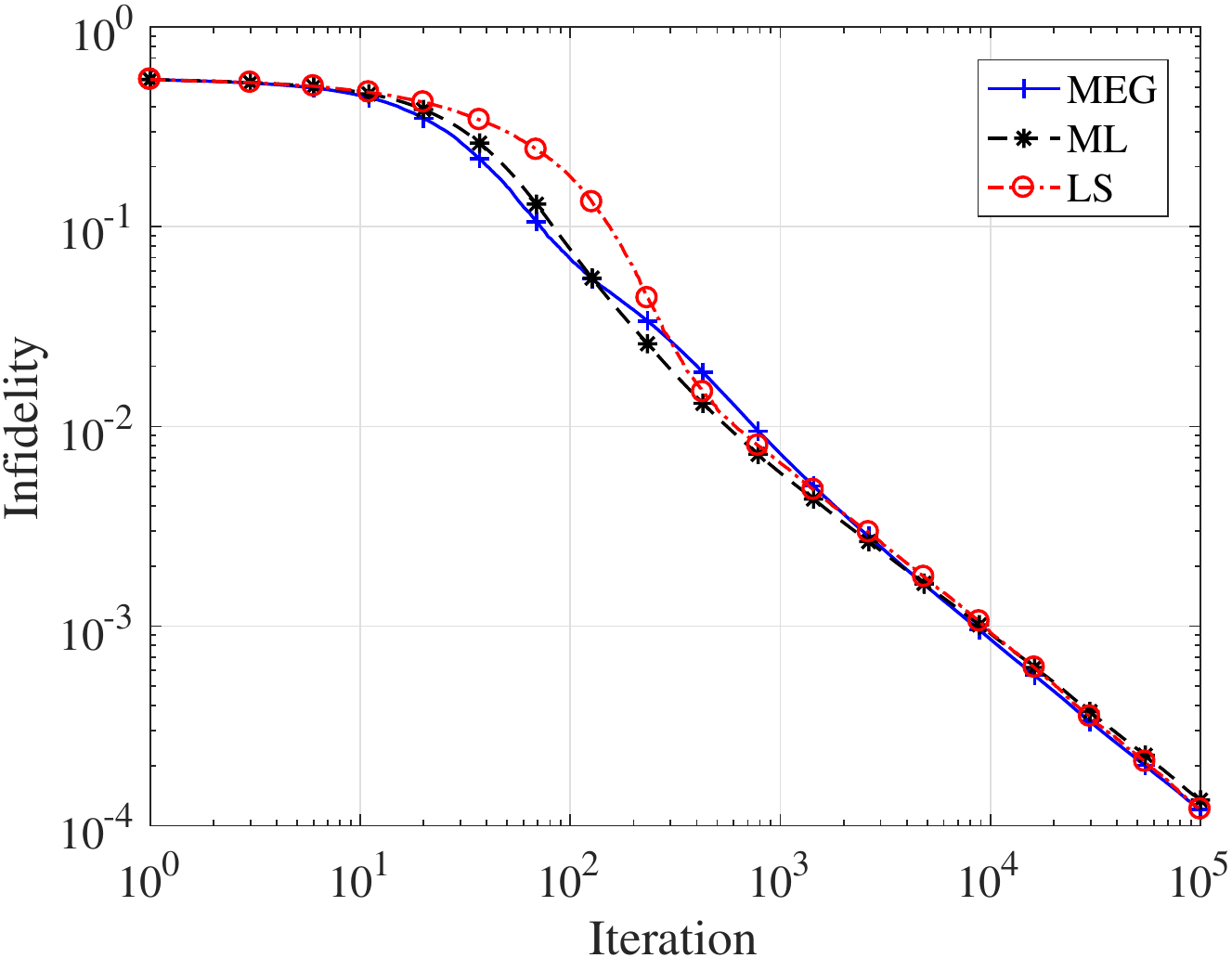}}
	\subfloat[4-qubit]{\includegraphics[width=0.48\textwidth]{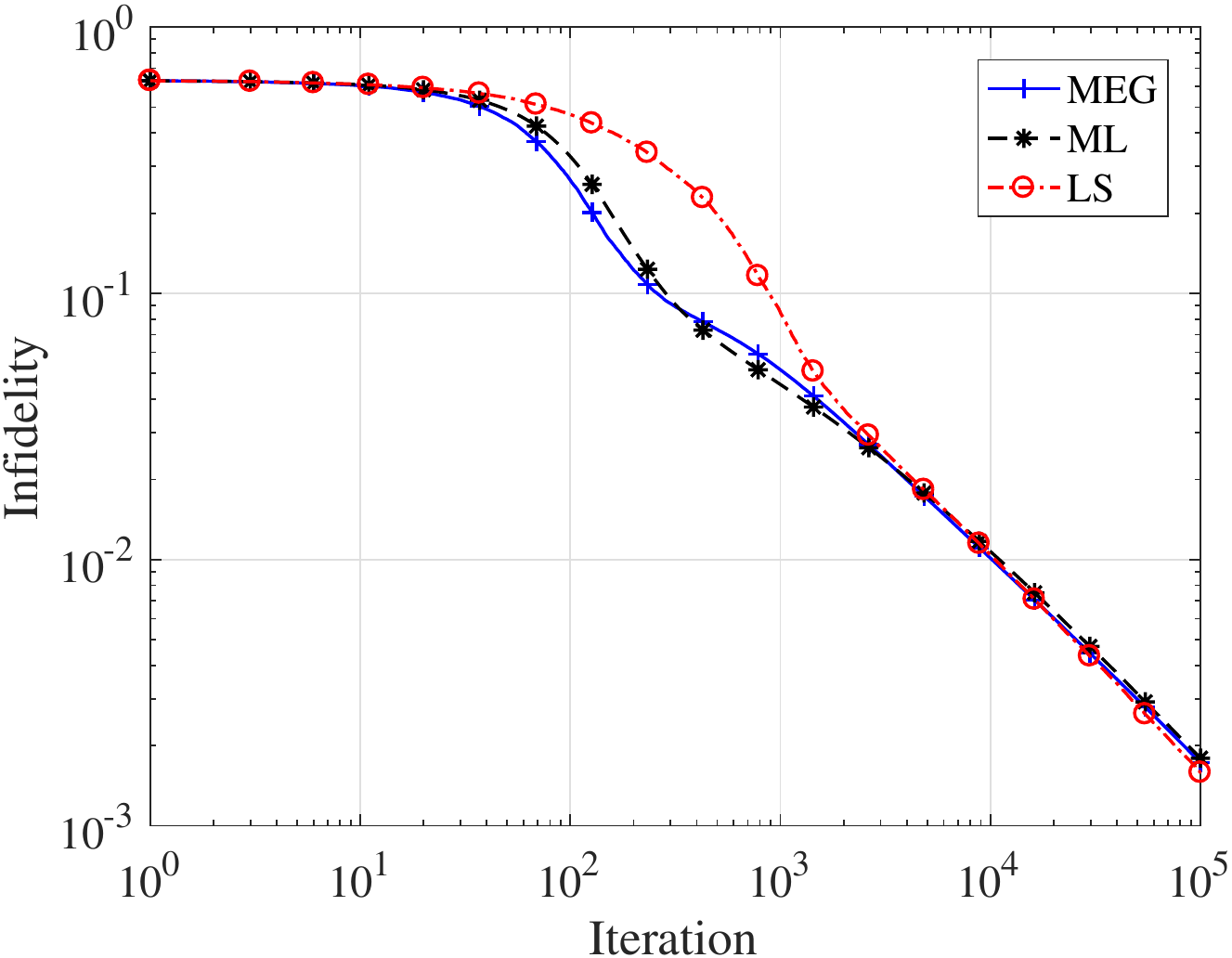}}
	
	\caption[]{Simulation results for different multi-qubit systems: (a) 1-qubit, (b) 2-qubit, (c) 3-qubit, and (d) 4-qubit. The infidelity is averaged over 1000 randomly generated quantum states and plotted versus the iteration number. The three lines correspond to the proposed matrix exponential gradient (MEG) method, maximum likelihood (ML) estimator and least-squares (LS) estimator. The number of shots per measurement is taken to be 1000 shots. }
	\label{fig:Results_multi}
\end{figure*}

Let us next discuss the (more realistic) case of noisy measurements.  Here we are taking a finite number of shots per measurement so that $\hat{y}_t$ is a random variable with mean $\tr(\rho X_t)$ and a variance that depends on the number of shots.
In this noisy case the previous scheme will not converge. To see this, assume at some iteration we hit the true state, $\hat{\rho}_t=\rho$. We then see that even for this state the gradient will be non-zero because in general $\hat{y}_t \ne \tr(\rho X_t)$ and thus the update rule will push the estimate away from the true state. To avoid this behavior, we propose a scheme with an adaptive, decreasing learning rate. We show that a convergence guarantee in the form of~\eqref{eq:conv2} holds, although the convergence will be slower. To achieve this, for any $\alpha \in (0, \frac14)$, we set the learning rate to $\eta_t = \frac14 t^{-\beta}$ with $\beta = \frac34 - \alpha$ to find that the MEG algorithm satisfies
\begin{align}
	\lim_{T\to\infty} {\Pr {\left\{\norm{\hat{\rho}_t - \rho}_F < \frac{1}{t^{\alpha}}\right\} } }= 1 \,,
\end{align} 
where the probability is taken over the measurement choices and outcomes, as well as $t$ uniformly from the set $\{1, 2, \ldots, T\}$.
Section \ref{sec:proof_noisy} discusses the proof of this statement.

Finally, for our numerical testing in low dimensions we propose another approach to solve the problem with noisy measurements by using a running average of the measurement outcomes for each measurement. This is effectively equivalent to increasing the number $N$ of shots when certain measurements are repeated. This means that eventually the algorithm approaches the noise-free case and convergence is thus ensured (we leave this as an informal statement). Moreover, numerical simulations show that this method converges faster than using an adaptive learning rate. Figure~\ref{fig:Results_multi} compares the convergence of our algorithm to an ML and LS estimator for 1-, 2-, 3- and 4-qubit systems, showing that the proposed algorithm converges to the other two methods. We use infidelity between the true state $\rho$ and the estimate $\hat{\rho}_t$ as an accuracy measure, which is defined as $1- \left( \tr\left| \sqrt{\rho\!\!\phantom{\hat{\rho}}}\sqrt{\hat{\rho}_t}\right| \right)^2$. So, in terms of accuracy measured by infidelity, MEG can perform as well as other methods. Further numerical results can be found in Section~\ref{sec:sims}.

In terms of complexity however, MEG outperforms the other methods with complexity of $O(d^3)$ per update compared to $O(d^4)$ for ML and LS. The bottleneck for MEG is the matrix exponentiation step in the update as seen in Algorithm~\ref{alg:MEG}. 

\section{Preliminaries}\label{megmethod}
We give a detailed description of the problem of online quantum state estimation and an overview of the matrix-exponentiated gradient (MEG) update rule.

\subsection{Problem statement}

Given a quantum system in an unknown state $\rho$, it is required to find an estimated quantum state $\hat{\rho}$, based on the classical outcomes of some measurements performed on copies of the system. The system has dimensions $d$, and so for the case of an $m$-qubit system, we have $d=2^m$. For the numerical simulations in this paper we consider such $m$-qubit systems and perform Pauli measurements on each individual qubit. We shall denote the set of measurements operator by $\{X^{[i]}\}_{i=1}^{d^2-1}$. 

The outcome of such a binary measurement is a classical bit. We shall call these outcomes ``up'' and ``down'' corresponding to the $\pm 1$ eigenvalues of the Pauli operator. In order to do tomography, we assume that we have an ensemble of identically prepared quantum systems in the same unknown state $\rho$, so we can perform independent measurements on each of the subsystems, and calculate the average outcome. So, selecting a measurement operator $X_t = X^{[i(t)]}$ at time step $t$, the expected value of the measurement denoted by $y_t$ as predicted by the Born rule is given by $y_t = \tr(\rho X_t)$,
while the actual average we calculate if we repeat the experiment $N$ times is the random variable 
\begin{align}
	\hat{y}_t=\frac{n_\uparrow-n_\downarrow}{N}=\frac{2n_\uparrow - N}{N}.
\end{align}
Here, $n_\uparrow$ is the number of times the ``up'' outcome was observed, while $n_\downarrow$ is the number of times the ``down'' outcome was observed. We know that $n_\uparrow$ follows a binomial distribution. Given a measurement operator represented in terms of its eigenvalue projectors as $X_t=\Pi_\uparrow-\Pi_\downarrow$, we have 
$n_\uparrow \sim B(N,p)$ with $p=\tr(\rho\Pi_\uparrow)$. It is then easy to verify that
\begin{align}
	\E\{\hat{y}_t\} = 2p-1 = y_t\,, \quad \textrm{and} \quad 
	\text{Var}\{\hat{y}_t\} = \frac{4 p(1-p)}{N} = \frac{1-y_t^2}{N} \,. \label{equ:var}
\end{align}

We can then repeat the whole procedure and obtain a sequence of data points in the form $\{ (X_1,\hat{y}_1), ...(X_t,\hat{y}_t),...\}$. Notice that the measurement outcomes $\hat{y}_t$ form an independent and identically distributed (i.i.d.) set of random variables.
Since we are proposing an online algorithm, we do not have the whole data set in advance. We obtain one point at a time, and use it to update an estimate $\hat{\rho}_t$ of the true state. We would like that $\hat{\rho}_t$ converges to $\rho$ as $t$ increases.


\subsection{The matrix-exponentiated gradient (MEG) method}

The MEG method was proposed in \cite{tsuda_matrix_2005,globerson_exponentiated_2007} for some classical machine learning applications and symmetric matrices. The algorithm trivially generalizes to Hermitian matrices. Given a new data point $(X_t,\hat{y}_t)$, the loss function at time step $t$, evaluated for a general quantum state $\sigma$, is defined as
\begin{align}
	L_t(\sigma):=(\tr(\sigma X_t)-\hat{y}_t)^2.
\end{align}
The gradient of the loss function at time step $t$ is then
\begin{align}
	\nabla L_t=2(\tr(\sigma X_t)-\hat{y}_t)X_t.
\end{align}

Consider now the following online cost function
\begin{align}
	D(\hat{\rho}_{t+1}||\hat{\rho}_t) + \eta_t L_t(\hat{\rho}_{t+1}),
\end{align}
where $D$ is Umegaki's quantum relative entropy~\cite{umegaki62} defined as $D(\rho||\sigma) = \tr(\rho\log(\rho)-\rho\log(\sigma))$ for any two states $\rho$ and $\sigma$, and $\eta_t$ is the learning rate. This cost function represents two conflicting goals. The first one is to have an estimate that is near the previous estimate, quantified by the relative entropy. This is important because in the online setting of the problem, we do not want the algorithm to forget what it has learnt so far. The second goal is to move the new estimate so that the loss function at the new data point is hopefully smaller. The learning rate $\eta_t$ controls this trade-off. Minimizing the cost function with respect to $\hat{\rho}_{t+1}$ by taking the gradient (see Appendix A in \cite{tsuda_matrix_2005} for the details of the calculation) and setting it to zero results in
\begin{align}
	\log(\hat{\rho}_{t+1}) = \log(\hat{\rho}_{t}) - \eta\nabla{L_t}(\hat{\rho}_{t+1}) -I,
\end{align}
where $I$ denotes the identity matrix.
Now, since we cannot find an explicit form for $\hat{\rho}_{t+1}$, we may approximate $\hat{\rho}_{t+1}$ by $\hat{\rho}_t$ in the gradient to arrive at
$\log(\hat{\rho}_{t+1}) = \log(\hat{\rho}_{t}) - \eta\nabla{L_t}(\hat{\rho}_{t}) - I $, or, equivalently,
\begin{align}
	\hat{\rho}_{t+1}=\exp \left( \log\left(\hat{\rho}_t\right)-\eta\nabla L_t(\hat{\rho}_t) - I \right).
\end{align}
This form of the update rule ensures that if we start with a positive definite matrix $\hat{\rho}_t$, and a Hermitian operator $X_t$, then we are sure that the new estimate $\hat{\rho}_{t+1}$ is positive definite. This is because the terms inside the exponential function are Hermitian, and thus the matrix exponential results in a positive definite matrix.  Next, we want to make sure that the estimate has unit trace, to be a valid quantum state. So, we normalize to finally obtain the MEG rule:
\begin{align}
	\hat{\rho}_{t+1}=\frac{\exp \left( \log\left(\hat{\rho}_t\right)-\eta\nabla L_t (\hat{\rho}_t)\right)}{\tr\left(\exp\left( \log\left(\hat{\rho}_t\right)-\eta\nabla L_t(\hat{\rho}_t)\right)\right)} .
\end{align}

The update rule can also be expressed in the following compact alternative form:
\begin{align}
	G_{t}&=G_{t-1} - \eta \nabla L_t(\hat{\rho}_t),\quad G_0=\log(\hat{\rho}_0),\quad \textrm{and}\quad \hat{\rho}_{t} =\frac{\exp(G_{t})}{\tr\exp(G_{t})} \,.
\end{align}

\section{Convergence analysis}
\label{sec:proofs}
This section starts with stating some bounds related to the MEG update rule. Next, the proof of convergence for the noise-free case is given, followed by the proof of convergence in the noisy case. Finally, a discussion about the proposed running-average technique is presented. Some additional proofs are provided in Appendix \ref{appendix:proofs}.
\subsection{General bounds on the loss functions for the MEG rule}
We will start by stating the following lemma which bounds the normalization constant that appears in the MEG update rule 
\begin{align}
	\log(\hat{\rho}_{t+1})=\log(\hat{\rho}_t)+\delta_t X_t- \log(\tr(\exp(\log(\hat{\rho}_t)+\delta_t X_t))),
\end{align}
where
\begin{align}
	\delta_t=-2\eta(\tr(\hat{\rho}_t X_t)-\hat{y}_t),
\end{align}
and measurement operators satisfying $-I \le X_t \le I$ to ensure that the updated estimate has unit trace. This bound will be used to prove other important results. The proof is given in Appendix \ref{apndx:lem:log_Zt} generalizing the methods that involved real symmetric matrices in \cite{tsuda_matrix_2005} to complex Hermitian matrices.

\begin{restatable}{lemma}{logzt}
\label{lem:log_Zt}
The normalization constant in the MEG rule update is bounded by
\begin{align}
	\log(\tr(\exp(\log(\hat{\rho}_t)+\delta_t X_t))) \le \frac{\delta_t^2 }{2} +\delta_t\tr\left(\hat{\rho}_t X_t\right).
\end{align}
\end{restatable}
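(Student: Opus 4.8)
The plan is to combine the Golden--Thompson inequality with an eigenvalue-wise convexity bound on the matrix exponential, thereby reducing the operator statement to an elementary scalar inequality. Throughout, write $Z_t := \tr(\exp(\log(\hat{\rho}_t) + \delta_t X_t))$ for the normalization constant and abbreviate $m := \tr(\hat{\rho}_t X_t) \in [-1,1]$.

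First I would apply the Golden--Thompson inequality $\tr(\exp(A+B)) \le \tr(\exp(A)\exp(B))$, which holds for any two Hermitian matrices $A,B$ (and so applies verbatim after passing from the real symmetric to the complex Hermitian setting). Taking $A = \log(\hat{\rho}_t)$ and $B = \delta_t X_t$ yields
\begin{align}
 Z_t \le \tr\big(\hat{\rho}_t \exp(\delta_t X_t)\big).
\end{align}
Second, I would bound $\exp(\delta_t X_t)$ as an operator. Since $-I \le X_t \le I$, every eigenvalue $x$ of $X_t$ lies in $[-1,1]$, and convexity of $s \mapsto e^{\delta_t s}$ gives $e^{\delta_t x} \le \cosh(\delta_t) + x\sinh(\delta_t)$ for each such $x$. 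Because $\exp(\delta_t X_t)$, $I$, and $X_t$ are all functions of the same Hermitian matrix $X_t$, they are simultaneously diagonalizable and this scalar bound lifts to the operator inequality $\exp(\delta_t X_t) \le \cosh(\delta_t) I + \sinh(\delta_t) X_t$. Multiplying by $\hat{\rho}_t \ge 0$, taking the trace, and using $\tr(\hat{\rho}_t) = 1$ gives
\begin{align}
 Z_t \le \cosh(\delta_t) + m\sinh(\delta_t).
\end{align}

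Finally, it remains to establish the scalar inequality $\log(\cosh\delta + m\sinh\delta) \le \tfrac{\delta^2}{2} + m\delta$ for all real $\delta$ and all $m \in [-1,1]$; taking logarithms of the previous line and invoking this closes the proof. I would verify it by setting $h(\delta) := \tfrac{\delta^2}{2} + m\delta - g(\delta)$ with $g(\delta) := \log(\cosh\delta + m\sinh\delta)$, and checking $h(0) = h'(0) = 0$ together with $h'' \ge 0$. A short differentiation shows $g'' = 1 - (g')^2 \le 1$, and the same computation gives $g'' = (1-m^2)/(\cosh\delta + m\sinh\delta)^2 \ge 0$, which in particular keeps the argument of the logarithm positive (as also follows from $\cosh\delta \ge |\sinh\delta|$ and $|m| \le 1$). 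Hence $h'' = 1 - g'' \ge 0$, so $h$ is convex with a stationary point at the origin and is therefore nonnegative everywhere.

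The genuinely nontrivial ingredient is the Golden--Thompson inequality; everything downstream is elementary. The one point requiring a little care is the lifting of the scalar exponential bound to an operator inequality and the positivity of $\cosh\delta + m\sinh\delta$, but both follow cleanly from $|m| \le 1$ and the spectral theorem. I do not expect the Hermitian (rather than real symmetric) setting to introduce any new difficulty, since both Golden--Thompson and the spectral calculus hold unchanged for Hermitian matrices.
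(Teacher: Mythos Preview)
Your proof is correct and follows essentially the same route as the paper: Golden--Thompson, then a chord bound on $\exp(\delta_t X_t)$ from convexity of the exponential on the spectrum $[-1,1]$ of $X_t$, then a scalar inequality. The only cosmetic difference is that the paper first rescales to $A=(X_t+I)/2\in[0,I]$ and then invokes a ready-made lemma $\log(1-q(1-e^{p}))\le pq+p^{2}/8$ (with $p=2\delta_t$, $q=(m+1)/2$), whereas you parametrize the same chord as $\cosh\delta_t+m\sinh\delta_t$ and verify the scalar bound directly via $h''\ge 0$; the two resulting inequalities are literally identical after the change of variables, so your argument is a self-contained variant of the paper's.
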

Next, we state the following lemma which puts a bound on the difference between the loss function evaluated at the estimate, and a general state. The lemma relates this difference to the progress of the estimator towards that general state. This is the main lemma that will be used to prove the convergence of MEG. Appendix \ref{apndx:lem:Lt_inequ} gives the proof generalizing the results \cite{tsuda_matrix_2005} to the quantum setting.
\begin{restatable}{lemma}{ltinequ}
\label{lem:Lt_inequ}
Given the loss function $L_t(\hat{\rho}_t)= (\tr(\hat{\rho}_t X_t)-\hat{y}_t)^2$ with measurement operators $-I \le X_t \le I$ and learning rate $0<\eta<\frac{1}{2}$, then for any state $\sigma$,
\begin{align}
	\eta L_t(\hat{\rho}_t) - \frac{\eta}{1-2\eta} L_t(\sigma) \le D(\sigma||\hat{\rho}_t)-D(\sigma||\hat{\rho}_{t+1}).
\end{align}
\end{restatable}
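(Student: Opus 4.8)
The plan is to rewrite the progress $D(\sigma||\hat{\rho}_t)-D(\sigma||\hat{\rho}_{t+1})$ as an exact expression using the logarithmic form of the update rule, then invoke Lemma~\ref{lem:log_Zt} to convert it into a lower bound, and finally reduce the target inequality to a perfect square. First I would expand both relative entropies via Umegaki's definition $D(\sigma||\tau)=\tr(\sigma\log\sigma-\sigma\log\tau)$. The self-entropy term $\tr(\sigma\log\sigma)$ is common to both and cancels, leaving
\begin{align}
D(\sigma||\hat{\rho}_t)-D(\sigma||\hat{\rho}_{t+1})=\tr\bigl(\sigma(\log\hat{\rho}_{t+1}-\log\hat{\rho}_t)\bigr).
\end{align}
Substituting $\log\hat{\rho}_{t+1}-\log\hat{\rho}_t=\delta_t X_t-\log(Z_t)\,I$ with $Z_t=\tr\exp(\log\hat{\rho}_t+\delta_t X_t)$, and using the fact that $\sigma$ is a state so that $\tr(\sigma)=1$, this collapses to the clean identity
\begin{align}
D(\sigma||\hat{\rho}_t)-D(\sigma||\hat{\rho}_{t+1})=\delta_t\tr(\sigma X_t)-\log(Z_t).
\end{align}

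Next I would apply Lemma~\ref{lem:log_Zt}, which gives $\log(Z_t)\le\frac{\delta_t^2}{2}+\delta_t\tr(\hat{\rho}_t X_t)$. Inserting this bound yields
\begin{align}
D(\sigma||\hat{\rho}_t)-D(\sigma||\hat{\rho}_{t+1})\ge\delta_t\bigl(\tr(\sigma X_t)-\tr(\hat{\rho}_t X_t)\bigr)-\frac{\delta_t^2}{2},
\end{align}
so it then suffices to show that this right-hand side dominates $\eta L_t(\hat{\rho}_t)-\frac{\eta}{1-2\eta}L_t(\sigma)$.

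For the final step I would introduce the residuals $a=\tr(\hat{\rho}_t X_t)-\hat{y}_t$ and $b=\tr(\sigma X_t)-\hat{y}_t$, so that $L_t(\hat{\rho}_t)=a^2$, $L_t(\sigma)=b^2$, $\delta_t=-2\eta a$, and $\tr(\sigma X_t)-\tr(\hat{\rho}_t X_t)=b-a$. Substituting these, dividing through by $\eta>0$, and collecting terms, the claimed inequality reduces to showing that
\begin{align}
(1-2\eta)a^2-2ab+\frac{b^2}{1-2\eta}\ge0.
\end{align}
Writing $c=1-2\eta$, which is strictly positive precisely because $\eta<\frac12$, the left-hand side equals $\frac{1}{c}(ca-b)^2$, a perfect square, so the inequality holds.

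The computation is essentially routine once the identity in the first step is in place; the main thing to track carefully is the bookkeeping in the last step, where completing the square forces exactly the coefficient $\frac{1}{1-2\eta}$ in front of $L_t(\sigma)$ and makes transparent why the hypothesis $\eta<\frac12$ is needed to keep the denominator (and hence the perfect-square expression) positive. I would also remark that the only place the state property of $\sigma$ enters is through $\tr(\sigma)=1$, which is what eliminates the normalization constant $\log(Z_t)$ in the first identity.
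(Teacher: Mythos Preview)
Your proof is correct and follows essentially the same route as the paper's: expand the relative entropy difference to obtain the identity $\delta_t\tr(\sigma X_t)-\log Z_t$, apply Lemma~\ref{lem:log_Zt}, and then complete the square. The only cosmetic difference is in the final algebra: the paper first bounds the cross term via $-ab\ge -|a||b|=-\sqrt{L_t(\hat{\rho}_t)L_t(\sigma)}$ and then completes the square in the square-root variables, whereas you work directly with the signed residuals $a,b$ and factor $(1-2\eta)a^2-2ab+\tfrac{1}{1-2\eta}b^2=\tfrac{1}{1-2\eta}\bigl((1-2\eta)a-b\bigr)^2$, which is slightly cleaner but equivalent.
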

This leads to the following corollary that bounds the loss function of the estimate when the true state is used as the comparison state, in the case of noise-free measurements (i.e.\ $\hat{y}_t=y_t$).  
\begin{corollary}
\label{cor:loss_rho}
Given the loss function $L_t(\hat{\rho}_t)= (\tr(\hat{\rho}_t X_t)-y_t)^2$ with measurement operators $-I \le X_t \le I$ and learning rate $0<\eta<\frac{1}{2}$. Then, given the true state $\rho$, the following relation holds:
\begin{align}
	\eta L_t(\hat{\rho}_t) \le D(\rho||\hat{\rho}_t)-D(\rho||\hat{\rho}_{t+1}).
\end{align}

\begin{proof}
Apply Lemma \ref{lem:Lt_inequ}, set $\sigma =\rho$, and use the fact that $L_t(\rho)=0$.
\end{proof}

\end{corollary}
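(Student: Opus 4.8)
The plan is to obtain this directly from Lemma~\ref{lem:Lt_inequ}, which is the general inequality comparing the loss at the current estimate to the loss at an arbitrary reference state $\sigma$. First I would verify that the hypotheses of that lemma are in force: the measurement operators satisfy $-I \le X_t \le I$ and the learning rate obeys $0 < \eta < \frac{1}{2}$, exactly as assumed in the corollary. With these in place, Lemma~\ref{lem:Lt_inequ} supplies, for every state $\sigma$,
\begin{align}
	\eta L_t(\hat{\rho}_t) - \frac{\eta}{1-2\eta} L_t(\sigma) \le D(\sigma||\hat{\rho}_t) - D(\sigma||\hat{\rho}_{t+1}).
\end{align}

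The key step is then to specialize the free reference state to the true state, $\sigma = \rho$, and to invoke the noise-free assumption $\hat{y}_t = y_t = \tr(\rho X_t)$. Under this assumption the loss evaluated at the true state vanishes,
\begin{align}
	L_t(\rho) = (\tr(\rho X_t) - y_t)^2 = 0,
\end{align}
so the subtracted term $\frac{\eta}{1-2\eta} L_t(\rho)$ drops out of the inequality, leaving exactly $\eta L_t(\hat{\rho}_t) \le D(\rho||\hat{\rho}_t) - D(\rho||\hat{\rho}_{t+1})$. I expect no genuine obstacle here, since the result is an immediate specialization; the only point that deserves care is that the cancellation hinges specifically on the noiseless setting, where $\hat{y}_t$ coincides with its mean $y_t$. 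In the noisy case this cancellation fails and the residual term $\frac{\eta}{1-2\eta} L_t(\rho)$ must instead be retained and controlled in expectation, which is precisely what necessitates the more delicate argument of Section~\ref{sec:proof_noisy}.
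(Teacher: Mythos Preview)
Your proposal is correct and follows exactly the paper's approach: apply Lemma~\ref{lem:Lt_inequ}, set $\sigma = \rho$, and use $L_t(\rho) = 0$ in the noise-free case. The additional commentary on why the cancellation fails in the noisy setting is accurate and helpful context, but the core argument is identical.
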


\subsection{Convergence analysis for noise-free measurements}
\label{sec:proof_noiseless}

The choice of measurements for doing quantum state estimation is arbitrary. However, in this paper we consider the case of performing local Pauli measurements on each qubit of a multi-qubit system. This facilitates the experimental realization compared to performing some other, possibly global, measurement. The proofs will start by calculating some expectation values  involving Pauli operators and loss functions. These results will be used to prove the main theorem showing the convergence of MEG in the noise-free case. We start with the following lemma about the the set of Pauli operators for multi-qubit systems.
\begin{lemma}
\label{lem:U_is_t} 
The set $U = \{U_i\}_{i=0}^{d^2-1}$ of Pauli operators including the identity operator in a $d$-dimensional quantum system satisfy
\begin{align}
	\frac{1}{d}\sum_i{U_i\otimes U_i^{\dagger}}=P_{21},
\end{align}
where $P_{21}$ is the swap operator defined as 
\begin{align}
	P_{21} = \sum_{i,j}{| i \rangle\!\langle j | \otimes | j \rangle\!\langle i |}. 
\end{align}

\begin{proof}
The Pauli's form a unitary orthonormal basis of Hermitian $d \times d$ matrices. Therefore, they form a quantum 1-design due to Proposition 6 in \cite{adam2013applications}. In other words, 
\begin{align}
	\int_{U}{U\rho U^{\dagger} dU}=\sum_i{\frac{1}{d^2} U_i\rho U_i^{\dagger}}.
\end{align} 
Now, from (3.27) and (3.29) in \cite{adam2013applications}, $\sum_i{\frac{1}{d^2} U_i\otimes U_i^{\dagger}}=\frac{P_{21}}{d}$.
\end{proof}

\end{lemma}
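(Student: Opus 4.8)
The plan is to prove the identity by recognizing that the appropriately normalized Pauli operators form a complete orthonormal basis of the space of $d \times d$ matrices, and then to establish the operator equation by testing both sides against a spanning set of product operators. This avoids invoking the $1$-design machinery and reduces everything to a single substantive fact about orthonormal operator bases.

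First I would record the key input. Since the Pauli strings $U_i$ (including the identity) are Hermitian and satisfy $\tr(U_i U_j) = d\,\delta_{ij}$, the rescaled operators $U_i/\sqrt{d}$ form an orthonormal basis of the $d^2$-dimensional space of $d\times d$ matrices under the Hilbert--Schmidt inner product. In particular, for any matrix $A$ one has the Parseval resolution $A = \frac1d \sum_i \tr(U_i A)\, U_i$. Next, because the product operators $\{A \otimes B\}$ span all operators on $\mathbb{C}^d \otimes \mathbb{C}^d$, it suffices to check that $\frac1d\sum_i U_i\otimes U_i^\dagger$ and $P_{21}$ yield the same trace pairing with every $A \otimes B$. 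On the left this gives $\frac1d\sum_i \tr(U_i A)\,\tr(U_i^\dagger B)$; using $U_i^\dagger = U_i$ together with the Parseval identity above, this collapses to $\tr(AB)$. On the right, the defining action of the swap, $P_{21}\,(|k\rangle\otimes|l\rangle) = |l\rangle\otimes|k\rangle$, gives $\tr(P_{21}(A\otimes B)) = \tr(AB)$ directly. Matching the two expressions for all $A,B$ forces the operators to coincide and establishes the lemma.

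The only real obstacle is bookkeeping rather than anything conceptual: one must keep careful track of the complex conjugate and of the index ordering in $P_{21}$, since $U_i^\dagger$ (not $U_i$) sits on the second tensor factor while the swap pairs the ``in'' index of one factor with the ``out'' index of the other. An equivalent and equally clean route is to work entrywise in the computational basis, verifying that $\frac1d\sum_i (U_i)_{ba}(U_i^*)_{cd} = \delta_{bc}\delta_{ad} = \langle bd|\,P_{21}\,|ac\rangle$, which is precisely the completeness relation $\sum_\alpha (M_\alpha)_{ba}(M_\alpha^*)_{cd} = \delta_{bc}\delta_{ad}$ for an orthonormal operator basis $\{M_\alpha\}$. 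Either way, the substantive content is entirely contained in the orthonormality and completeness of the Pauli basis, and the rest is a short index computation.
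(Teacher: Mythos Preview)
Your argument is correct, and it differs from the paper's in a meaningful way. The paper proceeds indirectly: it invokes that the Pauli set is a unitary $1$-design (citing an external proposition) and then quotes two further equations from the same reference to conclude $\frac{1}{d^2}\sum_i U_i\otimes U_i^\dagger = P_{21}/d$. In other words, the paper's proof is essentially a chain of citations through the Haar-integral/$1$-design formalism.

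Your route is more elementary and fully self-contained: you use only that $\{U_i/\sqrt{d}\}$ is an orthonormal basis for the Hilbert--Schmidt inner product, and then verify the operator identity by pairing both sides against the spanning family $A\otimes B$, reducing everything to the Parseval expansion $A=\frac1d\sum_i\tr(U_iA)\,U_i$ and the swap trick $\tr(P_{21}(A\otimes B))=\tr(AB)$. This buys you a proof that requires no outside references and no design theory, and it makes transparent that the only substantive ingredient is completeness of the operator basis. The paper's approach, by contrast, situates the lemma within the $1$-design framework, which is conceptually nice if one already has that machinery but is overkill for this statement. One small caveat: your first route uses $U_i^\dagger=U_i$, which holds for tensor-product Paulis on $m$ qubits (the paper's setting) but not for generalized Weyl--Heisenberg ``Paulis'' in arbitrary dimension; your entrywise alternative via $\sum_\alpha (M_\alpha)_{ba}(M_\alpha^{*})_{cd}=\delta_{bc}\delta_{ad}$ avoids that assumption and is the cleaner of the two.
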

Next, we calculate the expectation value of a Pauli operator that is tensored with itself. This calculation will be needed in the calculation of the expectation of the loss function.
\begin{lemma}
\label{lem:EXt} 
The expectation value of the Pauli operators chosen uniformly at random from the set $ U - \{I\}$ satisfies the relation:
\begin{align}
	\E_{X}\left\{X\otimes X\right\}=\frac{d}{d^2-1}P_{21} - \frac{1}{d^2-1}I_d\otimes I_d
\end{align}

\begin{proof}
We have
\begin{align}
	\E_{X}\left\{X\otimes X\right\}&=\frac{1}{d^2-1}\sum_{j=1}^{d^2-1}{X^{[j]}\otimes X^{[j]}}\\
	&=\frac{1}{d^2-1}\left(\sum_{i=0}^{d^2-1}{U_i \otimes U_i}-I_d\otimes I_d\right)\\
	&=\frac{d}{d^2-1}P_{21} - \frac{1}{d^2-1}I_d\otimes I_d,
\end{align}
where the last equality holds from Lemma \ref{lem:U_is_t}, and $I_d$ is the identity operator of dimension $d$.
\end{proof}

\end{lemma}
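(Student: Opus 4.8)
The plan is to deduce this directly from the one-design identity of Lemma~\ref{lem:U_is_t}, treating the restriction to the non-identity Paulis as a simple bookkeeping correction. First I would unfold the expectation as the uniform average
\begin{align}
	\E_{X}\left\{X\otimes X\right\}=\frac{1}{d^2-1}\sum_{j=1}^{d^2-1}{X^{[j]}\otimes X^{[j]}},
\end{align}
since $X$ ranges over the $d^2-1$ non-identity operators with equal probability. The key move is then to complete this partial sum to the full Pauli set $\{U_i\}_{i=0}^{d^2-1}$ appearing in Lemma~\ref{lem:U_is_t} by adding back and subtracting the single excluded term, namely the identity $U_0=I_d$, so that $\sum_{j=1}^{d^2-1}X^{[j]}\otimes X^{[j]} = \sum_{i=0}^{d^2-1}U_i\otimes U_i - I_d\otimes I_d$.

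Second, I would invoke Lemma~\ref{lem:U_is_t}. The one subtlety is that the lemma is stated with $U_i\otimes U_i^{\dagger}$ while here we need $U_i\otimes U_i$; these agree because the Pauli operators are Hermitian, so $U_i^{\dagger}=U_i$. This immediately yields $\sum_{i=0}^{d^2-1}U_i\otimes U_i = d\,P_{21}$, and substituting gives
\begin{align}
	\E_{X}\left\{X\otimes X\right\}=\frac{1}{d^2-1}\left(d\,P_{21}-I_d\otimes I_d\right),
\end{align}
which rearranges to the claimed expression.

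I do not expect any genuine obstacle here: the entire content is carried by Lemma~\ref{lem:U_is_t}, and what remains is elementary. The only points to get right are the normalization---dividing by $d^2-1$ rather than $d^2$, reflecting the removal of the identity from the ensemble---and the Hermiticity remark that licenses applying the one-design identity verbatim to $U_i\otimes U_i$.
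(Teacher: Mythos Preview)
Your proposal is correct and follows exactly the same approach as the paper: write the expectation as the uniform average over the $d^2-1$ non-identity Paulis, complete the sum to the full Pauli set by adding and subtracting $I_d\otimes I_d$, and apply Lemma~\ref{lem:U_is_t}. Your explicit remark that Hermiticity of the Paulis is what lets one read $U_i\otimes U_i$ as $U_i\otimes U_i^{\dagger}$ is a nice clarification that the paper leaves implicit.
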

The following lemma is a commonly-used result in quantum information. The proof is direct---see for example Lemma 1.2.1 in \cite{low2010pseudo}.
\begin{lemma}[Swap trick]
\label{lem:swap} 
For any quantum system with arbitrary dimensions, and for two operators $M$ and $N$, we have
$\tr(M N)=\tr\big((M\otimes N)P_{21}\big)$,
where $P_{21}$ is the swap operator on the quantum system (interchanges any two copies).
\end{lemma}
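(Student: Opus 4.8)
The plan is to prove the identity by direct computation, expanding both sides in a fixed orthonormal basis $\{|i\rangle\}$ and verifying equality term by term. First I would substitute the definition $P_{21} = \sum_{i,j} |i\rangle\langle j| \otimes |j\rangle\langle i|$ and compute the product $(M\otimes N)P_{21}$, using that the tensor product is multiplicative on each factor, so that $(M\otimes N)\big(|i\rangle\langle j| \otimes |j\rangle\langle i|\big) = M|i\rangle\langle j| \otimes N|j\rangle\langle i|$.

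Next I would take the trace on the composite space. Since the trace of a tensor product factorizes as $\tr(A\otimes B) = \tr(A)\tr(B)$, the right-hand side becomes $\sum_{i,j}\tr\big(M|i\rangle\langle j|\big)\tr\big(N|j\rangle\langle i|\big) = \sum_{i,j} \langle j|M|i\rangle\,\langle i|N|j\rangle$. The key step is then to recognize the inner sum over $i$ as a resolution of the identity, $\sum_i |i\rangle\langle i| = I$, which collapses the double sum to $\sum_j \langle j| M N |j\rangle = \tr(MN)$, as desired.

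The only thing to be careful about is the index bookkeeping in the swap operator and the ordering of the factors; there is no genuine analytic obstacle, as the statement is an algebraic identity valid for arbitrary (not necessarily Hermitian) operators $M$ and $N$. An equivalent basis-free route would be to note that $P_{21}$ acts by $P_{21}(|u\rangle\otimes|v\rangle) = |v\rangle\otimes|u\rangle$ and to evaluate $\tr\big((M\otimes N)P_{21}\big)$ directly on product basis vectors, arriving at the same collapse via completeness.
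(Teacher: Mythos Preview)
Your proposal is correct; the direct basis expansion and use of the completeness relation is the standard argument. The paper itself does not give a proof at all---it merely remarks that the proof is direct and points to Lemma~1.2.1 in~\cite{low2010pseudo}---so your computation is exactly the kind of verification the paper defers to the reference.
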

We are now ready to prove the following lemma in which the expectation of the loss function is calculated.
\begin{lemma}
\label{lem:E_Lt}
Assuming we select the measurement operator $X_t$ at each time iteration uniformly at random from the set $ U - \{I\}$ then for any true state $\rho$ and any state $\sigma$ independent of $X_t$,

\begin{align}
	\E_{X_t}\{L_t(\sigma)\}=\frac{d}{d^2-1}\norm{\sigma-\rho}_F^2.
\end{align}

\begin{proof}
From the definition of the loss function,
\begin{align}
	L_t(\sigma)=(\tr(\sigma X_t)-\tr(\rho X_t))^2.
\end{align}
Taking the expectation of the loss function with respect to $X_t$ we get:
\begin{align}
	\E_{X_t}\{L_t(\sigma)\} &=\E_{X_t}\{(\tr(\sigma X_t)-\tr(\rho X_t))^2\}\\ 
	&= \E_{X_t}\{(\tr(\sigma - \rho)X_t)^2\}\\
	&=\E_{X_t}\left\{\tr\big(\left((\sigma - \rho)X_t\right)\otimes\left((\sigma - \rho)X_t\right)\big)\right\}\\
	&=\E_{X_t}\left\{\tr\big(\left((\sigma - \rho)\otimes (\sigma -\rho)\right) \left(X_t\otimes X_t\right)\big)\right\}\\
	&=\tr\big(\left((\sigma - \rho)\otimes (\sigma - \rho)\right) \E_{X_t}\left\{X_t\otimes X_t\right\}\big).
\end{align}
Then, applying Lemma \ref{lem:EXt}, 
\begin{align}
	\E_{X_t}\{L_t(\sigma\} &=\tr\left(\left((\sigma - \rho)\otimes (\sigma - \rho)\right) \left(\frac{d}{d^2-1}P_{21} - \frac{1}{d^2-1}I\otimes I\right) \right)\\
	&=  \frac{d}{d^2-1}\tr\big(\left((\sigma - \rho)\otimes (\sigma - \rho)\right) P_{21}\big) - \frac{1}{d^2-1}\tr\big(\left((\sigma - \rho)\otimes (\sigma - \rho)\right)(I\otimes I)\big).
\end{align}
Now, applying the swap trick in Lemma \ref{lem:swap},
\begin{align}
	\E_{X_t}\{L_t(\sigma)\} &=\frac{d}{d^2-1}\tr\big((\sigma - \rho)(\sigma - \rho)\big) - \frac{1}{d^2-1}\tr\left(\sigma - \rho\right)\tr\left(\sigma - \rho\right)\\
	&=\frac{d}{d^2-1}\tr\big((\sigma - \rho)^2\big)\\
	&=\frac{d}{d^2-1}\norm{\sigma - \rho}_F^2.
\end{align}
In particular, it is clear that at any time step $t$, if $\sigma \ne \rho$, then $\E\{L_t(\sigma)\}>0$.
\end{proof}

\end{lemma}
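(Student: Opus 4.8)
The plan is to reduce the scalar squared loss to a trace over two copies of the system, so that the averaging over $X_t$ can be carried out using the second moment computed in Lemma~\ref{lem:EXt}. In the noise-free setting $\hat y_t = \tr(\rho X_t)$, so writing $A := \sigma - \rho$ the loss becomes $L_t(\sigma) = \big(\tr(A X_t)\big)^2$. First I would linearize the square by the elementary identity $\tr(B)\,\tr(C) = \tr(B \otimes C)$, which gives
\begin{align}
	L_t(\sigma) = \tr\big((A X_t) \otimes (A X_t)\big) = \tr\big((A \otimes A)(X_t \otimes X_t)\big).
\end{align}
Since $\sigma$ (and hence $A$) is independent of the randomly chosen $X_t$, linearity of expectation and of the trace let me pull the expectation all the way inside onto the only random factor, yielding $\E_{X_t}\{L_t(\sigma)\} = \tr\big((A \otimes A)\,\E_{X_t}\{X_t \otimes X_t\}\big)$.

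Next I would substitute the second-moment formula from Lemma~\ref{lem:EXt}, namely $\E_{X_t}\{X_t \otimes X_t\} = \tfrac{d}{d^2-1}P_{21} - \tfrac{1}{d^2-1} I_d \otimes I_d$, splitting the expectation into two trace terms. The swap trick of Lemma~\ref{lem:swap} collapses the first one, $\tr\big((A \otimes A)P_{21}\big) = \tr(A^2) = \norm{A}_F^2$, while the second factorizes as $\tr\big((A \otimes A)(I_d \otimes I_d)\big) = \tr(A)^2$. The crucial observation is that $A = \sigma - \rho$ is traceless, because $\sigma$ and $\rho$ are both unit-trace density matrices, so the entire $I_d \otimes I_d$ contribution vanishes. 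Collecting the surviving term gives exactly $\frac{d}{d^2-1}\norm{\sigma - \rho}_F^2$.

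The computation is essentially bookkeeping once the right objects are in place, so I do not anticipate a genuine obstacle; the only place demanding care is the tensor-product algebra---in particular keeping track of which copy each operator acts on so that $(A X_t) \otimes (A X_t) = (A \otimes A)(X_t \otimes X_t)$ holds, and recognizing that it is precisely the trace-zero property of $\sigma - \rho$ that eliminates the identity term and leaves a clean Frobenius norm. It is worth emphasizing that the resulting positivity (strictly positive whenever $\sigma \neq \rho$) is the real payoff driving the convergence argument: combined with Corollary~\ref{cor:loss_rho}, it converts the expected per-step loss into a strictly decreasing relative-entropy potential.
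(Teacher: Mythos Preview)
Your proposal is correct and follows essentially the same route as the paper's proof: linearize the square via $\tr(B)\tr(C)=\tr(B\otimes C)$, pull the expectation onto $X_t\otimes X_t$, invoke Lemma~\ref{lem:EXt}, and then use the swap trick together with $\tr(\sigma-\rho)=0$ to kill the identity term. The only cosmetic difference is your abbreviation $A:=\sigma-\rho$ and the closing remark tying the positivity to Corollary~\ref{cor:loss_rho}.
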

Now, we can show the following theorem considering convergence of the noiseless MEG.

\begin{theorem}
\label{lem:weak_conv_noiseless}
The state estimate using the MEG update rule converges in probability to the true state, i.e.\ for any $\delta>0$, 
\begin{align}
	\lim_{t\to\infty}{\Pr\left\{\norm{\hat{\rho}_t-\rho}_F^2<\delta\right\}}=1.
\end{align}
\end{theorem}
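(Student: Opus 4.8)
The plan is to turn the per-step progress bound of Corollary~\ref{cor:loss_rho} into a summable bound on the expected Frobenius distance, using the expected-loss identity of Lemma~\ref{lem:E_Lt}, and then to upgrade the resulting convergence in mean to convergence in probability via Markov's inequality.

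First I would sum the inequality of Corollary~\ref{cor:loss_rho} over $t=1,\dots,T$. The right-hand side telescopes, and since quantum relative entropy is nonnegative this gives
\begin{align}
\eta \sum_{t=1}^{T} L_t(\hat{\rho}_t) \le D(\rho||\hat{\rho}_1) - D(\rho||\hat{\rho}_{T+1}) \le D(\rho||\hat{\rho}_1).
\end{align}
Because the algorithm is initialized at the maximally mixed state $\hat{\rho}_1 = \frac{1}{d} I_d$, a direct computation gives $D(\rho||\hat{\rho}_1) = \tr(\rho\log\rho) + \log d \le \log d$, where I have used $\tr(\rho\log\rho) \le 0$. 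Hence the cumulative loss $\sum_{t=1}^{T} L_t(\hat{\rho}_t)$ is bounded by $(\log d)/\eta$ uniformly in $T$.

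The key step is taking the expectation over the measurement choices, and this is where the main subtlety lies. Let $\mathcal{F}_{t-1}$ denote the information generated by $X_1,\dots,X_{t-1}$. The online structure of the update rule ensures that $\hat{\rho}_t$ is a function of this past data alone, so it is independent of the freshly drawn $X_t$---which is exactly the hypothesis required to apply Lemma~\ref{lem:E_Lt} conditionally, yielding $\E[L_t(\hat{\rho}_t)\mid\mathcal{F}_{t-1}] = \frac{d}{d^2-1}\norm{\hat{\rho}_t-\rho}_F^2$. Taking full expectations of the summed inequality and rearranging then gives
\begin{align}
\sum_{t=1}^{T} \E\big[\norm{\hat{\rho}_t-\rho}_F^2\big] \le \frac{(d^2-1)\log d}{\eta\, d}.
\end{align}
Since this bound is independent of $T$, the nonnegative series $\sum_{t} \E[\norm{\hat{\rho}_t-\rho}_F^2]$ converges, which forces its terms to vanish, i.e.\ $\E[\norm{\hat{\rho}_t-\rho}_F^2]\to 0$ as $t\to\infty$.

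Finally I would convert this convergence in mean into convergence in probability by Markov's inequality: for any $\delta>0$,
\begin{align}
\Pr\big\{\norm{\hat{\rho}_t-\rho}_F^2 \ge \delta\big\} \le \frac{\E\big[\norm{\hat{\rho}_t-\rho}_F^2\big]}{\delta} \to 0,
\end{align}
as $t\to\infty$, which is precisely the claimed statement. The only genuinely delicate point in this argument is the independence of $\hat{\rho}_t$ from $X_t$ invoked in the expectation step; the remainder is routine manipulation of the telescoping sum together with a standard Markov estimate. I do not expect the initialization computation or the nonnegativity arguments to present any difficulty.
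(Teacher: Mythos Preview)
Your proposal is correct and follows essentially the same route as the paper: telescope Corollary~\ref{cor:loss_rho}, invoke Lemma~\ref{lem:E_Lt} using the independence of $\hat{\rho}_t$ from $X_t$, deduce that $\sum_t \E\|\hat{\rho}_t-\rho\|_F^2$ is finite so its terms vanish, and finish with Markov's inequality. The only cosmetic differences are that the paper takes expectations before summing (you sum pathwise first, which is equally valid since Corollary~\ref{cor:loss_rho} holds for every realization) and leaves the bound as $D(\rho\|\hat{\rho}_1)$ rather than specializing to $\log d$.
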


\begin{proof}

We know from Corollary \ref{cor:loss_rho} that,
\begin{align}
	\eta L_t(\hat{\rho}_t) \le D(\rho||\hat{\rho}_t)-D(\rho||\hat{\rho}_{t+1}).
\end{align}
Taking the expectation with respect to $X_t$,
\begin{align}
	\eta \E_{X_t}\{L_t(\hat{\rho}_t)\} \le \E_{X_t}\{D(\rho||\hat{\rho}_t)\} - \E_{X_t}\{D(\rho||\hat{\rho}_{t+1})\}.
\end{align}
Applying Lemma \ref{lem:E_Lt}, and using the fact that $\hat{\rho}_t$ is independent of $X_t$ we get
\begin{align}
	\frac{\eta d}{d^2-1}\norm{\hat{\rho}_t-\rho}_F^2 \le D(\rho||\hat{\rho}_t) - \E_{X_t}\{D(\rho||\hat{\rho}_{t+1})\}.
\end{align}
Taking the expectation of the above inequality over all past time iterations $\E$ = $\E_{X_1}...\E_{X_{t-1}}$ we get
\begin{align}
	\frac{\eta d}{d^2-1}\E\{\norm{\hat{\rho}_t-\rho}_F^2\} \le \E\{D(\rho||\hat{\rho}_t)\} - \E\{D(\rho||\hat{\rho}_{t+1})\}.
\end{align}
Next, we sum the inequality over the time iterations to get
\begin{align}
	\frac{\eta d}{d^2-1}\sum_{t=1}^{T}{\E\{\norm{\hat{\rho}_t-\rho}_F^2\}} &\le \E\{D(\rho||\hat{\rho}_1)\} - \E\{D(\rho||\hat{\rho}_{T+1})\} \\ 
	&\le \E\{D(\rho||\hat{\rho}_1)\}.
\end{align}
If we now take the limit as $T\to\infty$, we obtain 
\begin{align}
	\sum_{t=1}^{\infty}{\E\{\norm{\hat{\rho}_t-\rho}_F^2\}} &\le \frac{d^2-1}{\eta d}\E\{D(\rho||\hat{\rho}_1)\}\\
	&= \frac{d^2-1}{\eta d}D(\rho||\hat{\rho}_1).
\end{align}
Where the last line follows from the fact that the true state $\rho$ and the initial estimate $\hat{\rho}_1$ are independent of $X_t$ and $y_t$. Now the right hand side of the inequality is constant, so the series on the left hand side of the inequality converges. This implies by the divergence test that
\begin{align}
	\lim_{t\to\infty}{\E\{\norm{\hat{\rho}_t-\rho}_F^2\}}=0.
\end{align}
Now we can apply Lemma \ref{lem:prob} on the random variable $Z_t = \norm{\hat{\rho}_t-\rho}_F^2$  to conclude that 
\begin{align}
	\forall \delta>0: \quad \lim_{t\to\infty}{\Pr\left\{\norm{\hat{\rho}_t-\rho}_F^2\le\delta\right\}}=1.
\end{align}
Therefore, the estimate $\hat{\rho}_t$ converges in probability to the true state $\rho$.
\end{proof}
Finally, we prove the main theorem that shows a stronger statement for the convergence of MEG algorithm in the case of noise-free measurements.  

\begin{theorem}
Let $\delta \in (0,1)$. Then for any $\alpha \in \left(0,1\right)$, and learning rate $0<\eta<\frac{1}{2}$, there exists $T_0$ given by
\begin{equation}
	T_0 =\left(\frac{\frac{d^2-1}{\eta d} \log{d} +2}{\delta}\right)^\frac{3}{1-\alpha},
\end{equation}
such that for any $T> T_0$ we have,
\begin{align}
	\Pr {\left\{\norm{\hat{\rho}_t-\rho}_F^2 < \frac{1}{t^{\alpha}}\right\} } \ge 1-\delta,
\end{align} 
where the probability is taken over all measurement choices and $t$ uniformly in $\{1, 2, \ldots, T\}$. Moreover, 
\begin{align}
	\lim_{T\to\infty} {\Pr {\left\{\norm{\hat{\rho}_t-\rho}_F^2 < \frac{1}{t^{\alpha}}\right\} } }= 1 \,.
\end{align} 
\begin{proof}

Let the initial estimate be $\hat{\rho}_1 = \frac{I_d}{d}$. We know from Corollary \ref{cor:loss_rho} that,
\begin{align}
	\eta L_t(\hat{\rho}_t) \le D(\rho||\hat{\rho}_t)-D(\rho||\hat{\rho}_{t+1}).
\end{align}
Taking the expectation with respect to $X_t$,
\begin{align}
	\eta \E_{X_t}\{L_t(\hat{\rho}_t)\} \le \E_{X_t}\{D(\rho||\hat{\rho}_t)\} - \E_{X_t}\{D(\rho||\hat{\rho}_{t+1})\}.
\end{align}
Applying Lemma \ref{lem:E_Lt}, and using the fact that $\hat{\rho}_t$ is independent of $X_t$ we get
\begin{align}
	\frac{\eta d}{d^2-1}\norm{\hat{\rho}_t-\rho}_F^2 \le D(\rho||\hat{\rho}_t) - \E_{X_t}\{D(\rho||\hat{\rho}_{t+1})\}.
\end{align}
Taking the expectation of the above inequality over all past time iterations $\E$ = $\E_{X_0}\E_{X_1}...\E_{X_{t-1}}$ we get
\begin{align}
	\frac{\eta d}{d^2-1}\E\{\norm{\hat{\rho}_t-\rho}_F^2\} \le \E\{D(\rho||\hat{\rho}_t)\} - \E\{D(\rho||\hat{\rho}_{t+1})\}.
\end{align}
Next, we sum the inequality over the time iterations to get
\begin{align}
	\frac{\eta d}{d^2-1}\sum_{t=1}^{T}{\E\{\norm{\hat{\rho}_t-\rho}_F^2\}} &\le \E\{D(\rho||\hat{\rho}_1)\} - \E\{D(\rho||\hat{\rho}_{T+1})\}\\ 
	&\le \E\{D(\rho||\hat{\rho}_1)\}\\
	&\le\log{d}.
\end{align}
Now, let $\epsilon_t = \E\{\norm{\hat{\rho}_t-\rho}_F^2\}$, $\delta_t = \frac{1}{t^{\alpha+\gamma}}$, and $\gamma = \frac{2}{3}(1-\alpha)$. Notice that $\alpha+\gamma <1$.
Define the set
\begin{align}
	\mathcal{T} :=\left\{t\in\{1,2,..T\}: \epsilon_t \ge \frac{1}{t^{\alpha+\gamma}} \right\}.
\end{align}
Rearranging the terms in the inequality we get
\begin{align}
	 \frac{d^2-1}{\eta d}\log{d} &\ge \sum_{t=1}^{T}{\epsilon_t}\\
	 &\ge \sum_{t=1}^{T}{\left(\mathbbm{1}_{\epsilon_t >\delta_t}\right)\epsilon_t}\\
	 &\ge \sum_{t=1}^{T}{\left(\mathbbm{1}_{\epsilon_t > \delta_t}\right)\frac{1}{t^{\alpha+\gamma}}}\\
	 &\ge \sum_{t=1}^{T}{\left(\mathbbm{1}_{\epsilon_t >\delta_t}\right)\frac{1}{T^{\alpha+\gamma}}}\\
	 &\ge \frac{|\mathcal{T}_{\delta}|}{T^{\alpha+\gamma}}.
\end{align}
In other words, the ratio between the number of iterations in which $\E\{\norm{\hat{\rho}_t-\rho}_F^2\} \ge \frac{1}{t^{\alpha+\gamma}}$ and the total number of iterations $T$ we performed so far is bounded by 
\begin{align}
	\frac{|\mathcal{T}_{\delta}|}{T} \le K T^{\alpha+\gamma-1},
\end{align}
where $K:=\frac{d^2-1}{\eta d}\log{d}$. This implies that 
\begin{align}
	\lim_{T\to\infty}{\left(\frac{T_{\delta}}{T}\right)}=0,
\end{align}
because $\alpha+\gamma<1$. This means that increasing the number of iterations results in decreasing the number of times where the estimate was not accurate enough. Let's state this formally. Assume we do a total number of iterations $T$. If we select at random a fixed time step $1\le \tilde{t} \le T$, then there will be two possible outcomes. Either $\epsilon_{\tilde{t}} \le \delta_{\tilde{t}}$ or $\epsilon_{\tilde{t}}> \delta_{\tilde{t}}$. Assume we get the first outcome, then by applying Markov's inequality,
\begin{align}
	\epsilon_{\tilde{t}} \le\delta_{\tilde{t}} \implies \Pr\left\{ \norm{\hat{\rho}_{\tilde{t}}-\rho}_F^2\} \ge \frac{1}{\tilde{t}^{\alpha}} \right\} &\le \E\left\{\norm{\hat{\rho}_{\tilde{t}}-\rho}_F^2\right\} \tilde{t}^{\alpha}\\
	&\le \delta_{\tilde{t}}\tilde{t}^{\alpha}\\
	&=\tilde{t}^{-\gamma}.
\end{align}
Now, we can find the joint probability
\begin{align}
	\Pr_{t, \hat{\rho}_t}\left\{ \norm{\hat{\rho}_t-\rho}_F^2 \ge \frac{1}{t^{\alpha}}\right\} &= \Pr\left\{\norm{\hat{\rho}_{t}-\rho}_F^2\} \ge \frac{1}{t^{\alpha}} \middle|t=\tilde{t} \right\}\Pr \left\{ \tilde{t} \in \mathcal{T}_{\delta} \right\} \nonumber \\ &+ \Pr \left\{ \norm{\hat{\rho}_{t}-\rho}_F^2\} \ge \frac{1}{t^{\alpha}} \middle|t=\tilde{t}\right\} \Pr \left\{\tilde{t} \not \in \mathcal{T}_{\delta} \right\} \\
	&\le \Pr \left\{ \tilde{t} \in \mathcal{T}_{\delta} \right\} + \sum_{\tilde{t}\not \in \mathcal{T}_{\delta}}{\Pr \left\{ \norm{\hat{\rho}_{t}-\rho}_F^2\} \ge \frac{1}{t^{\alpha}} \middle|t=\tilde{t}\right\}\frac{1}{T}} \\
	&\le \frac{|\mathcal{T}_{\delta}|}{T} + \frac{1}{T} \sum_{\tilde{t}=1}^{T}{\Pr \left\{ \norm{\hat{\rho}_{t}-\rho}_F^2\} \ge \frac{1}{t^{\alpha}} \middle|t=\tilde{t}\right\}} \\ 
	&\le \frac{|\mathcal{T}_{\delta}|}{T} + \frac{1}{T} \sum_{\tilde{t}=1}^{T}{\tilde{t}^{-\gamma}} \\ 
	&\le \frac{|\mathcal{T}_{\delta}|}{T} + \frac{1}{T} \sum_{\tilde{t}=1}^{T}{\left(\frac{1}{\tilde{t}^2}\right)^{\frac{\gamma}{2}}}. 
\end{align}

Applying Jensen's inequality on the second term (noting that $f(x)=x^r$ is a concave function for $0<r<1$) yields
\begin{align}
	\Pr \left\{ \norm{\hat{\rho}_t-\rho}_F^2 \ge \frac{1}{t^{\alpha}}\right\} &=\frac{|\mathcal{T}_{\delta}|}{T} + \left(\frac{1}{T} \sum_{\tilde{t}=1}^{T}{\frac{1}{\tilde{t}^2}}\right)^{\frac{\gamma}{2}} \\ 
	&\le \frac{|\mathcal{T}_{\delta}|}{T} + \left(\frac{1}{T} \sum_{\tilde{t}=1}^{\infty}{\frac{1}{\tilde{t}^2}}\right)^{\frac{\gamma}{2}}\\
	&=\frac{|\mathcal{T}_{\delta}|}{T} + \left(\frac{\pi^2}{6}\frac{1}{T}\right)^{\frac{\gamma}{2}}\\
	&\le \frac{|\mathcal{T}_{\delta}|}{T} + 2 T^{-\frac{\gamma}{2}}
\end{align}
Therefore,
\begin{align}
	\Pr \left\{ \norm{\hat{\rho}_t-\rho}_F^2 < \frac{1}{t^{\alpha}}\right\} &\ge 1 - \frac{|\mathcal{T}_{\delta}|}{T} - 2T^{-\frac{\gamma}{2}} \\
	&\ge  1 - K T^{\alpha + \gamma -1} - 2T^{-\frac{\gamma}{2}}\\
	&= 1 -  K T^{-\frac{1-\alpha}{3}} - 2T^{-\frac{1-\alpha}{3}}\\
	&= 1 -  T^{-\frac{1-\alpha}{3}} \left(K + 2\right).
\end{align}
Now, let 
\begin{equation}
	T_0 = \left(\frac{K+2}{\delta}\right)^{\frac{3}{1-\alpha}},
\end{equation}
then, if choose $T > T_0$, then
\begin{equation}
	\delta \ge T^{-\frac{1-\alpha}{3}} \left(K + 2\right),
\end{equation}	
or,
\begin{align}
	1 - \delta &\le 1- T^{-\frac{1-\alpha}{3}} \left(K + 2\right)\\
	&\le \Pr \left\{ \norm{\hat{\rho}_t-\rho}_F^2 < \frac{1}{t^{\alpha}}\right\}.
\end{align}	

\end{proof}
\end{theorem}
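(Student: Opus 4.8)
The plan is to combine the per-step progress bound of Corollary~\ref{cor:loss_rho} with the expected-loss identity of Lemma~\ref{lem:E_Lt} to obtain a bound on $\sum_t \E\{\norm{\hat{\rho}_t-\rho}_F^2\}$ that is uniform in the number of steps, and then to convert this summability into the claimed high-probability rate through a two-level argument over both the measurement randomness and the uniformly chosen time step. First I would take the inequality of Corollary~\ref{cor:loss_rho}, average it over $X_t$, and invoke Lemma~\ref{lem:E_Lt} together with the independence of $\hat{\rho}_t$ from $X_t$ to get $\frac{\eta d}{d^2-1}\norm{\hat{\rho}_t-\rho}_F^2 \le D(\rho||\hat{\rho}_t)-\E_{X_t}\{D(\rho||\hat{\rho}_{t+1})\}$. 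Taking the expectation over all earlier measurement choices and summing over $t=1,\dots,T$ makes the right-hand side telescope, leaving only $\E\{D(\rho||\hat{\rho}_1)\}$; choosing $\hat{\rho}_1=I_d/d$ bounds this by $\log d$. Writing $\epsilon_t := \E\{\norm{\hat{\rho}_t-\rho}_F^2\}$, this yields $\sum_{t=1}^{T}\epsilon_t \le \frac{d^2-1}{\eta d}\log d =: K$ for every $T$, the crucial point being that $K$ is independent of $T$.

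Next I would exploit summability to limit how often $\epsilon_t$ can be large. Introducing an intermediate threshold $\delta_t = t^{-(\alpha+\gamma)}$ with a parameter $\gamma>0$ to be fixed later, I define the bad set $\mathcal{T} := \{t \le T : \epsilon_t \ge \delta_t\}$. Each index in $\mathcal{T}$ contributes at least $\delta_t \ge T^{-(\alpha+\gamma)}$ to the bounded sum, so $K \ge |\mathcal{T}|\,T^{-(\alpha+\gamma)}$, giving $|\mathcal{T}|/T \le K\,T^{\alpha+\gamma-1}$, which vanishes as long as $\alpha+\gamma<1$.

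I would then decompose the joint probability over a uniformly random $t=\tilde t$ and the measurement outcomes according to whether $\tilde t \in \mathcal{T}$. For $\tilde t \notin \mathcal{T}$ we have $\epsilon_{\tilde t} < \delta_{\tilde t}$, so Markov's inequality applied to the nonnegative random variable $\norm{\hat{\rho}_{\tilde t}-\rho}_F^2$ gives $\Pr\{\norm{\hat{\rho}_{\tilde t}-\rho}_F^2 \ge \tilde t^{-\alpha}\} \le \epsilon_{\tilde t}\,\tilde t^{\alpha} \le \tilde t^{-\gamma}$. Averaging over $\tilde t$ and bounding $\frac{1}{T}\sum_{t\le T} t^{-\gamma}$ by $\left(\frac{\pi^2}{6T}\right)^{\gamma/2}\le 2T^{-\gamma/2}$ via Jensen's inequality (concavity of $x\mapsto x^{\gamma/2}$ applied to the terms $t^{-2}$), I reach $\Pr\{\norm{\hat{\rho}_t-\rho}_F^2 \ge t^{-\alpha}\} \le |\mathcal{T}|/T + 2T^{-\gamma/2} \le K\,T^{\alpha+\gamma-1} + 2T^{-\gamma/2}$.

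The final step is to balance the two decaying terms. Setting $1-\alpha-\gamma=\gamma/2$ fixes $\gamma=\frac{2}{3}(1-\alpha)$, so both exponents collapse to $-(1-\alpha)/3$ and the failure probability is at most $(K+2)\,T^{-(1-\alpha)/3}$; solving $(K+2)\,T^{-(1-\alpha)/3}\le\delta$ for $T$ reproduces the stated $T_0$, and sending $T\to\infty$ yields the limit statement. The main obstacle I anticipate is the two-level bookkeeping: one must cleanly separate the contribution of the rarely-bad indices---controlled by summability alone, for \emph{every} realization of the measurements---from that of the typically-good indices---controlled by Markov over the measurement randomness---and then tune $\gamma$ to equalize their decay rates. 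Too aggressive a threshold inflates $|\mathcal{T}|$, while too conservative a one weakens the Markov estimate, so hitting the optimal exponent $\gamma=\frac{2}{3}(1-\alpha)$ is precisely what makes the rate come out to $T^{-(1-\alpha)/3}$.
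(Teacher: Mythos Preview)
Your proposal is correct and follows essentially the same approach as the paper: the same telescoping via Corollary~\ref{cor:loss_rho} and Lemma~\ref{lem:E_Lt}, the same bad-set/Markov split with threshold $t^{-(\alpha+\gamma)}$, the same Jensen bound on $\frac{1}{T}\sum t^{-\gamma}$, and the same choice $\gamma=\tfrac{2}{3}(1-\alpha)$ to balance the two decaying terms. The only cosmetic difference is that the paper fixes $\gamma$ at the outset whereas you motivate it at the end as the exponent-balancing condition.
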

Notice, that taking the limit as $T\to\infty$ we obtain that $\delta=0$, and therefore
\begin{align}
	\lim_{T\to\infty} {\Pr {\left\{\norm{\hat{\rho}_t-\rho}_F^2 \ge \frac{1}{t^{\alpha}}\right\} } }= 0 \,.
\end{align} 

\subsection{Convergence analysis for noisy measurements}
\label{sec:proof_noisy}
In this part, we show that using an adaptive learning rate with noisy measurements results in the convergence of the MEG estimate to the true state. First, some expectation values will be calculated based on similar techniques discussed in the noiseless case. After that, the optimal adaptive learning rate is derived in such a way to ensure the convergence of the estimate to the true state in probability. However, the learning rate in this case depends on the true state which is not practical. So, finally we show that we can choose a learning rate independent of the true state and prove even a stronger statement of convergence. 


We will start with the following lemma to calculate the expectation value of the noise term that appears in the loss function due to performing finite number of measurements. 
\begin{lemma}
\label{lem:EVar} 
The expectation value of the Pauli operators chosen uniformly at random from the set $ U - \{I\}$ satisfy the relation:

\begin{align}
	\E_{X_t}\left\{\frac{1-y_t^2}{N}\right\}=\frac{d}{N(d^2-1)}\left(d-\norm{\rho}_F^2\right).
\end{align}
\end{lemma}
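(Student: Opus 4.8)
The plan is to recognize that this expectation is essentially the same second-moment computation already carried out in Lemma~\ref{lem:E_Lt}, only with $\rho$ playing the role that $\sigma-\rho$ played there. Since $N$ is a fixed constant, I would first pull it out of the expectation and write
\begin{align}
	\E_{X_t}\left\{\frac{1-y_t^2}{N}\right\} = \frac{1}{N}\left(1 - \E_{X_t}\{y_t^2\}\right),
\end{align}
so that the whole task reduces to evaluating the single quantity $\E_{X_t}\{y_t^2\}$, where $y_t = \tr(\rho X_t)$.

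For this second moment I would reuse the swap-trick machinery. Writing $y_t^2 = (\tr(\rho X_t))^2 = \tr\big((\rho X_t)\otimes(\rho X_t)\big) = \tr\big((\rho\otimes\rho)(X_t\otimes X_t)\big)$, I can push the expectation through the trace so that it acts on $X_t\otimes X_t$ alone and then invoke Lemma~\ref{lem:EXt} to obtain
\begin{align}
	\E_{X_t}\{y_t^2\} = \frac{d}{d^2-1}\tr\big((\rho\otimes\rho)P_{21}\big) - \frac{1}{d^2-1}\tr\big((\rho\otimes\rho)(I_d\otimes I_d)\big).
\end{align}

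The two remaining traces I would evaluate with the swap trick (Lemma~\ref{lem:swap}) together with the normalization of the state. The first term gives $\tr\big((\rho\otimes\rho)P_{21}\big) = \tr(\rho^2) = \norm{\rho}_F^2$, while the second factorizes as $\tr(\rho)\tr(\rho) = 1$ because $\rho$ has unit trace. Substituting these back and simplifying,
\begin{align}
	1 - \E_{X_t}\{y_t^2\} = 1 - \frac{d\norm{\rho}_F^2 - 1}{d^2-1} = \frac{d\left(d - \norm{\rho}_F^2\right)}{d^2-1},
\end{align}
and dividing by $N$ yields the claimed identity. I expect no genuine obstacle here: the derivation follows the established pattern almost verbatim. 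The only point requiring a moment of care is remembering to use $\tr(\rho)=1$ to collapse the $I_d\otimes I_d$ term, which is precisely the step that distinguishes this calculation from the $\sigma-\rho$ version in Lemma~\ref{lem:E_Lt}, where the analogous term vanished because $\tr(\sigma-\rho)=0$.
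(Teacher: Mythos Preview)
Your proposal is correct and follows essentially the same route as the paper's own proof: pull out $N$, rewrite $y_t^2$ as $\tr\big((\rho\otimes\rho)(X_t\otimes X_t)\big)$, apply Lemma~\ref{lem:EXt}, and then use the swap trick together with $\tr(\rho)=1$ to simplify. Your remark about the parallel with Lemma~\ref{lem:E_Lt} and the role of $\tr(\rho)=1$ versus $\tr(\sigma-\rho)=0$ is exactly the right observation.
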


\begin{proof}
We have
\begin{align}
	\E_{X_t}\left\{\frac{1-y_t^2}{N}\right\}&=\frac{1-\E_{X_t}\{y_t^2\}}{N}\\
	&=\frac{1-\E_{X_t}\{\tr(\rho X_t)^2\}}{N}\\
	&=\frac{1-\E_{X_t}\{\tr((\rho\otimes\rho) (X_t\otimes X_t) )\}}{N}\\
	&= \frac{1-\tr((\rho\otimes\rho) \E_{X_t}\{X_t\otimes X_t \}) }{N}.
\end{align}
Applying now Lemma \ref{lem:EXt}, we get
\begin{align}
	\E_{X_t}\left\{\frac{1-y_t^2}{N}\right\}&=\frac{1-\tr\left((\rho\otimes\rho) (\frac{d}{d^2-1}P_{21} - \frac{1}{d^2-1}I\otimes I) \right)}{N}\\
	&=\frac{1}{N(d^2-1)}\left(d^2-1-d\tr(\rho^2)+\tr(\rho)^2\right)\\
	&=\frac{d}{N(d^2-1)}\left(d-\norm{\rho}_F^2\right),
\end{align}
where the swap trick in Lemma \ref{lem:swap} is used in the second line.
\end{proof}

Next, we give the following lemma to calculate the expectation of the loss function for the case of noisy measurements.
\begin{lemma}
\label{lem:E_Lt_noisy}
Assuming we select the measurement operator $X_t$ at each time iteration uniformly at random from the set $ U - \{I\}$ then for any true state $\rho$ and any state $\sigma$ independent of $X_t$ and $\hat{y}_t$ for any $t$,
\begin{align}
	\E_t\{L_t(\sigma)\}=\frac{d}{d^2-1}\left(\norm{\sigma - \rho}_F^2 + \frac{d-\norm{\rho}_F^2}{N}\right).
\end{align}

\begin{proof}

Recall the noisy loss function,
\begin{align}
	L_t(\sigma) = (\tr(\sigma X_t)-\hat{y}_t)^2.
\end{align}
Note that $\sigma$ is independent of $\hat{y}_t$, but can depend on the previous history. So, the expectation can be calculated as
\begin{align}
	\E_{\hat{y}_t}\{L_t(\sigma)\}&=\E_{\hat{y}_t}\{(\tr(\sigma X_t)-\hat{y}_t)^2\}\\
	&=\tr(\sigma X_t)^2-2\tr(\sigma X_t)\E_t\{\hat{y}_t\}+\E_t\{\hat{y}_t^2\}\\
	&=\tr(\sigma X_t)^2-2\tr(\sigma X_t) y_t + y_t^2+ \frac{1-y_t^2}{N}\\
	&= \left(\tr(\sigma X_t) -  y_t\right)^2 + \frac{1-y_t^2}{N}.
\end{align}
Now, Let's take the expectation with respect to $X_t$ as
\begin{align}
	\E_t\{L_t(\sigma)\}&=\E_{X_t}\E_{\hat{y}_t}\{L_t(\hat{\rho}_t)\}\\
	&=\E_{X_t}\left\{\left(\tr(\sigma X_t) -  y_t\right)^2 + \frac{1-y_t^2}{N}\right\}\\
	&=\E_{X_t}\{\left(\tr(\sigma X_t) -  y_t\right)^2 \} + \E_{X_t}\left\{\frac{1-y_t^2}{N}\right\}\\
	&= \frac{d}{d^2-1}\left(\norm{\sigma - \rho}_F^2 + \frac{d-\norm{\rho}_F^2}{N}\right),
\end{align}
where we used the results of Lemmas \ref{lem:E_Lt} and \ref{lem:EVar} in the last step. Notice that as $N\to\infty$, the result of the noiseless case is recovered.
\end{proof}

\end{lemma}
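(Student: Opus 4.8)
The plan is to compute the total expectation $\E_t$ by iterating it in the right order: first integrate out the measurement noise $\hat{y}_t$ with the observable $X_t$ held fixed, and only afterwards average over the uniformly random choice of Pauli $X_t$. This ordering is legitimate because $\sigma$ is assumed independent of both $X_t$ and $\hat{y}_t$, and---crucially---the conditional law of $\hat{y}_t$ depends on the past only through $X_t$, with mean and variance given by~\eqref{equ:var}. Writing $\E_t = \E_{X_t}\E_{\hat{y}_t}$, the whole calculation then splits into two additive pieces, each of which I can hand off to a lemma already proved.

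For the inner expectation I would apply the elementary bias--variance decomposition. With $X_t$ fixed, $\tr(\sigma X_t)$ is a deterministic scalar, so
\begin{align}
	\E_{\hat{y}_t}\{L_t(\sigma)\} = \big(\tr(\sigma X_t) - \E_{\hat{y}_t}\{\hat{y}_t\}\big)^2 + \text{Var}\{\hat{y}_t\}.
\end{align}
Substituting $\E_{\hat{y}_t}\{\hat{y}_t\} = y_t = \tr(\rho X_t)$ and $\text{Var}\{\hat{y}_t\} = (1-y_t^2)/N$ from~\eqref{equ:var} yields $\E_{\hat{y}_t}\{L_t(\sigma)\} = (\tr(\sigma X_t) - y_t)^2 + (1-y_t^2)/N$. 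The key point is that the cross term cancels exactly because $\hat{y}_t$ is an unbiased estimator of $y_t$, so the finite-shot noise contributes only through its variance and decouples cleanly into a separate additive term.

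It then remains to take $\E_{X_t}$ of each summand. The first, $(\tr(\sigma X_t) - y_t)^2 = (\tr((\sigma-\rho)X_t))^2$, is precisely the noiseless loss, so Lemma~\ref{lem:E_Lt} gives $\E_{X_t}\{(\tr(\sigma X_t)-y_t)^2\} = \frac{d}{d^2-1}\norm{\sigma-\rho}_F^2$; the second summand is exactly the quantity evaluated in Lemma~\ref{lem:EVar}, namely $\frac{d}{N(d^2-1)}(d-\norm{\rho}_F^2)$. Adding these and factoring out $d/(d^2-1)$ produces the claimed identity. Since the substantive combinatorics of the Pauli average already live in Lemmas~\ref{lem:E_Lt} and~\ref{lem:EVar}, the only thing to watch here is the bookkeeping of the expectation order and the independence hypotheses---in particular that the variance formula used in the inner step is the one valid for a $\pm1$-valued Pauli observable. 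Once that is in place the result follows, and as a sanity check it correctly reduces to Lemma~\ref{lem:E_Lt} in the limit $N\to\infty$.
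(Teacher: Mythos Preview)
Your proposal is correct and mirrors the paper's proof essentially step for step: first condition on $X_t$ and integrate out $\hat{y}_t$ via the mean/variance in~\eqref{equ:var} to get $(\tr(\sigma X_t)-y_t)^2 + (1-y_t^2)/N$, then average over $X_t$ using Lemmas~\ref{lem:E_Lt} and~\ref{lem:EVar}. The only cosmetic difference is that you invoke the bias--variance identity directly whereas the paper expands the square by hand, but the computation is the same.
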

Consequently, the following result shows that the true state is the optimal state that minimizes the loss function.
\begin{corollary}
The state $\rho$ is the unique state that minimizes the expectation of the noisy loss function, where
\begin{align}
	\E_t\{L_t(\rho)\}=\frac{d}{d^2-1} \frac{d-\norm{\rho}_F^2}{N}.
\end{align}

\end{corollary}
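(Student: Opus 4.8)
The plan is to read off both claims directly from Lemma~\ref{lem:E_Lt_noisy}, which already expresses the expected noisy loss at an arbitrary comparison state $\sigma$ as
\begin{align}
	\E_t\{L_t(\sigma)\}=\frac{d}{d^2-1}\left(\norm{\sigma - \rho}_F^2 + \frac{d-\norm{\rho}_F^2}{N}\right).
\end{align}
First I would substitute $\sigma=\rho$ to evaluate the claimed value: the term $\norm{\rho-\rho}_F^2$ vanishes, leaving exactly $\frac{d}{d^2-1}\frac{d-\norm{\rho}_F^2}{N}$, which is the asserted expression for $\E_t\{L_t(\rho)\}$.

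For the uniqueness of the minimizer, I would observe that in the formula above the only dependence on the variable $\sigma$ sits in the single term $\norm{\sigma-\rho}_F^2$; the additive noise term $\frac{d-\norm{\rho}_F^2}{N}$ and the prefactor $\frac{d}{d^2-1}$ are constants determined entirely by $\rho$, $d$, and $N$. Since $d\ge 2$ we have $\frac{d}{d^2-1}>0$, so minimizing $\E_t\{L_t(\sigma)\}$ over states $\sigma$ is equivalent to minimizing $\norm{\sigma-\rho}_F^2$. The Frobenius norm is positive definite, hence $\norm{\sigma-\rho}_F^2\ge 0$ with equality if and only if $\sigma=\rho$; therefore the minimum is attained, and attained only, at $\sigma=\rho$, with minimal value as computed above.

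The argument is essentially a one-line consequence of the preceding lemma, so there is no substantial obstacle. The only points deserving care are purely formal: confirming that $\rho$ lies in the admissible set of states over which we minimize (it does, being a valid quantum state), and noting that the statement tacitly uses $d\ge 2$ so that the prefactor is strictly positive and the clean split into a $\sigma$-dependent Frobenius term plus a $\sigma$-independent remainder is meaningful. Strictness of the uniqueness then rests solely on the positive-definiteness of the Frobenius norm, which is standard.
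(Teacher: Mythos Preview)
Your proposal is correct and is exactly the intended argument: the paper states this corollary without proof as an immediate consequence of Lemma~\ref{lem:E_Lt_noisy}, and your substitution $\sigma=\rho$ together with the positive-definiteness of the Frobenius norm is precisely the obvious one-line justification.
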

The following theorem shows how to select an adaptive learning rate that results in convergence of the MEG estimate in probability for noisy measurements. The proof is given in Appendix 
\ref{appndx:lem:weak_conv_noisy}.

\begin{restatable}{theorem}{weakconv}
\label{lem:weak_conv_noisy}
In the presence of noise, the state estimate using the MEG update rule with learning rate
\begin{align}
	\eta_t = \frac{1}{2}\frac{\E\{\norm{\hat{\rho}_t-\rho}_F^2\}}{\E\{\norm{\hat{\rho}_t-\rho}_F^2\}+2\left(\frac{d^2-1}{Nd}\right)},
\end{align} 
converges in probability to the true state, i.e.\ for all $\delta > 0$, 
\begin{align}
	\lim_{t\to\infty}{\Pr\left\{\norm{\hat{\rho}_t-\rho}_F^2\le\delta\right\}}=1.
\end{align}
\end{restatable}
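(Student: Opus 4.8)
The plan is to run the same telescoping argument that proved the noise-free Theorem~\ref{lem:weak_conv_noiseless}, but now carrying the irreducible noise floor $\E_t\{L_t(\rho)\}$ that Lemma~\ref{lem:E_Lt_noisy} introduces, and to use the adaptive learning rate to keep the per-step progress both positive and summable. Write $M := \frac{d}{d^2-1}$, $\epsilon_t := \E\{\norm{\hat{\rho}_t-\rho}_F^2\}$, $u := \frac{d^2-1}{Nd}$, $c := \frac{d-\norm{\rho}_F^2}{N}$, $V := \E_t\{L_t(\rho)\} = Mc$, and $D_t := \E\{D(\rho\|\hat{\rho}_t)\}$. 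First I would apply Lemma~\ref{lem:Lt_inequ} with comparison state $\sigma=\rho$; this is legitimate since each $\eta_t$ is a function of the \emph{deterministic} quantity $\epsilon_t$ and satisfies $\eta_t = \frac{\epsilon_t}{2(\epsilon_t+2u)} < \frac12$. Unlike the noiseless case $L_t(\rho)$ does not vanish, so I keep both terms. Taking the conditional expectation $\E_t=\E_{X_t}\E_{\hat{y}_t}$ and then the full expectation over the history, and using Lemma~\ref{lem:E_Lt_noisy} and its corollary (pulling the deterministic $\eta_t$ out of the expectation), gives the recursion
\begin{align}
	\eta_t\left(M\epsilon_t + V\right) - \frac{\eta_t}{1-2\eta_t}\,V \;\le\; D_t - D_{t+1}.
\end{align}

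Second, I would substitute the prescribed learning rate and simplify. The left-hand side collapses to $M\eta_t\epsilon_t - \frac{2M\eta_t^2 c}{1-2\eta_t}$, and inserting $\eta_t=\frac{\epsilon_t}{2(\epsilon_t+2u)}$ (so that $1-2\eta_t=\frac{2u}{\epsilon_t+2u}$) yields the clean form $\frac{M\epsilon_t^2}{\epsilon_t+2u}\big(\frac12-\frac{c}{4u}\big)$. The crucial algebraic check is that the prefactor $\frac12-\frac{c}{4u}$ is bounded away from zero: since $\frac{c}{u}=\frac{d(d-\norm{\rho}_F^2)}{d^2-1}\le 1$ by the purity bound $\norm{\rho}_F^2=\tr(\rho^2)\ge \frac1d$, we get $\frac12-\frac{c}{4u}\ge \frac14$, hence the progress is at least $\frac{M\epsilon_t^2}{4(\epsilon_t+2u)}\ge 0$. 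This is precisely the role of the adaptive rate: as $\epsilon_t$ shrinks, $\eta_t$ shrinks with it, so the noise-penalty term (quadratic in $\eta_t$) is suppressed relative to the signal term (linear in $\eta_t$) and the balance is preserved.

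Third, I would sum $\frac{M\epsilon_t^2}{4(\epsilon_t+2u)}\le D_t-D_{t+1}$ over $t$. The right-hand side telescopes and, by non-negativity of relative entropy with the maximally mixed initial estimate, is bounded by $D_1=D(\rho\|I/d)\le\log d$, so $\sum_{t=1}^\infty \frac{M\epsilon_t^2}{4(\epsilon_t+2u)}<\infty$ and the summand tends to $0$. Because $\epsilon_t$ is uniformly bounded (by $2$, using $\tr(\hat{\rho}_t\rho)\ge 0$), the denominator $\epsilon_t+2u$ stays bounded, which forces $\epsilon_t\to 0$. Finally I would invoke the Markov-type Lemma~\ref{lem:prob} on $Z_t=\norm{\hat{\rho}_t-\rho}_F^2$ to upgrade $\E\{Z_t\}\to 0$ into convergence in probability, i.e.\ $\Pr\{\norm{\hat{\rho}_t-\rho}_F^2\le\delta\}\to 1$ for every $\delta>0$.

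The main obstacle I anticipate is the second step. The noise floor $V>0$ is exactly what breaks a constant learning rate (as noted in the main text), and one must verify not merely that the chosen adaptive rate keeps the per-step progress \emph{non-negative}, but that it keeps it bounded below by a strictly positive multiple of $\epsilon_t^2/(\epsilon_t+2u)$, so that summability genuinely implies $\epsilon_t\to 0$. The enabling fact is the universal purity inequality $\tr(\rho^2)\ge \frac1d$, which guarantees $c\le u$ and hence the uniform gap $\frac12-\frac{c}{4u}\ge\frac14$ for \emph{every} target state $\rho$; a secondary subtlety worth stating carefully is that $\eta_t$ is deterministic (it is tuned to the ensemble-averaged error, not the realized trajectory), which is precisely what licenses pulling it out of the expectations in the first step.
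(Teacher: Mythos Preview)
Your proposal is correct and follows essentially the same approach as the paper: apply Lemma~\ref{lem:Lt_inequ} with $\sigma=\rho$, take expectations, invoke Lemma~\ref{lem:E_Lt_noisy}, substitute the adaptive rate, telescope, and finish with Lemma~\ref{lem:prob}. The only cosmetic differences are that the paper applies the purity bound $c\le u$ \emph{before} substituting $\eta_t$ (arriving directly at $\tfrac14\epsilon_t^2/(\epsilon_t+2u)$) whereas you substitute first and then bound $\tfrac12-\tfrac{c}{4u}\ge\tfrac14$, and your deduction of $\epsilon_t\to 0$ from summability via boundedness of $\epsilon_t$ is actually slightly cleaner than the paper's contradiction argument, which tacitly assumes the limit of $\epsilon_t$ exists.
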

The problem with this choice of learning rate, is that it depends on the true state. This might be useful in other applications like state tracking, but it will not be practical for tomography applications, where the true state is unknown. So, we show next that in fact we can select another form of the learning rate that is independent of the true state and show a stronger statement of convergence.
\begin{theorem}
Let $\delta \in (0,1)$, $\alpha \in \left(0,\frac{1}{2}\right)$ and $\beta \in \left(\frac{1}{2},1-\alpha\right)$. If we choose a learning rate of the form
\begin{align}
	\eta_t = \frac{\eta_0}{t^{\beta}} \quad \textrm{with} \quad  \eta_0 < \frac{1}{2}, 
\end{align}
then there exists $T_0$ given by
\begin{equation}
	T_0 =\left(\frac{\frac{d^2-1}{\eta_0 d} \left(\log{d} +\frac{2}{N}\frac{\eta_0^2}{1-2\eta_0}\zeta(2\beta)\right)+2}{\delta}\right)^\frac{3}{1-\alpha-\beta},
\end{equation}
such that for any $T> T_0$ we have,
\begin{align}
	\Pr {\left\{\norm{\hat{\rho}_t-\rho}_F^2 < \frac{1}{t^{\alpha}}\right\} } \ge 1-\delta,
\end{align} 
where the probability is taken over all measurement choices and $t$ uniformly in $\{1, 2, \ldots, T\}$. Moreover, 
\begin{align}
	\lim_{T\to\infty} {\Pr {\left\{\norm{\hat{\rho}_t-\rho}_F^2 < \frac{1}{t^{\alpha}}\right\} } }= 1 \,.
\end{align} 

\begin{proof}
Let the initial estimate be $\hat{\rho}_1 = \frac{I_d}{d}$. We know from Lemma \ref{lem:Lt_inequ} that,
\begin{align}
	\eta_t L_t(\hat{\rho}_t) - \frac{\eta_t}{1-2\eta_t} L_t(\rho) \le D(\rho||\hat{\rho}_t)-D(\rho||\hat{\rho}_{t+1}).
\end{align}
Taking the expectation with respect to $y_t$ followed by the the expectation with respect to $X_t$ we get,
\begin{align}
	\eta_t\E_t\{L_t(\hat{\rho}_t)\} -\frac{\eta_t}{1-2\eta_t}\E_t\{L_t(\rho)\} \le D(\rho||\hat{\rho}_t) - \E_t\{D(\rho||\hat{\rho}_{t+1})\}.
\end{align}
Applying Lemma \ref{lem:E_Lt_noisy}, we get
\begin{align}
	\eta_t\frac{d}{d^2-1}\left(\norm{\hat{\rho}_t-\rho}_F^2 + \frac{d-\norm{\rho}_F^2}{N}\right) - \frac{\eta_t}{1-2\eta_t} \frac{d}{d^2-1}\left(\frac{d-\norm{\rho}_F^2}{N}\right) \le D(\rho||\hat{\rho}_t) - \E_t\{D(\rho||\hat{\rho}_{t+1})\}.
\end{align}
Simplifying this expression and taking the expectation with respect to all previous time instants we get
\begin{align}
	\eta_t\E\{\norm{\hat{\rho}_t-\rho}_F^2\}  - \frac{2\eta_t^2}{1-2\eta_t}\left(\frac{d-\norm{\rho}_F^2}{N}\right) \le \frac{d^2-1}{d}\E\{D(\rho||\hat{\rho}_t) - D(\rho||\hat{\rho}_{t+1})\}.
\end{align}
The second term on the left hand side is a function of the purity of the true state (defined as $\norm{\rho}_F^2$). This term comes from the variance of the noise which varies according to the location of the state. It can be bounded to become
\begin{align}
	\eta_t\E\{\norm{\hat{\rho}_t-\rho}_F^2\}  - \frac{2\eta_t^2}{1-2\eta_t}\left(\frac{d^2-1}{Nd}\right) \le \frac{d^2-1}{d}\E\{D(\rho||\hat{\rho}_t) - D(\rho||\hat{\rho}_{t+1})\}.
\end{align}
Summing up the inequality over different time steps we get 
\begin{align}
	\sum_{t=1}^{T}{\eta_t\E\{\norm{\hat{\rho}_t-\rho}_F^2\}  - \frac{2\eta_t^2}{1-2\eta_t}\left(\frac{d^2-1}{Nd}\right)} &\le \frac{d^2-1}{d}\E\{D(\rho||\hat{\rho}_1) - D(\rho||\hat{\rho}_{T+1})\} \\
	&\le \frac{d^2-1}{d} \E\{D(\rho||\hat{\rho}_1)\}\\
	&\le \frac{d^2-1}{d} D(\rho||\hat{\rho}_1)\\
	&\le\frac{d^2-1}{d}\log{d}.
\end{align}
Now, by choosing learning rate in the form
\begin{align}
	\eta_t = \frac{\eta_0}{t^{\beta}}: \eta_0<\frac{1}{2},
\end{align}
the inequality becomes
\begin{align}
	\frac{d^2-1}{d} \log{d} &\ge \sum_{t=1}^{T}{\frac{\eta_0}{t^{\beta}}\E\{\norm{\hat{\rho}_t-\rho}_F^2\}  - \frac{2\eta_0^2}{t^{2{\beta}} -2\eta_0 t^{\beta}}\left(\frac{d^2-1}{Nd}\right)}\\
	&\ge \sum_{t=1}^{T}{\frac{\eta_0}{t^{\beta}}\E\{\norm{\hat{\rho}_t-\rho}_F^2\}  - \frac{2\eta_0^2}{t^{2\beta}-2\eta_0 t^{2\beta}}\left(\frac{d^2-1}{Nd}\right)}\\
	&\ge  \sum_{t=1}^{T}{\frac{\eta_0}{t^{\beta}}\E\{\norm{\hat{\rho}_t-\rho}_F^2\}} - \frac{2\eta_0^2}{1-2\eta_0}\left(\frac{d^2-1}{Nd}\right)\sum_{t=1}^{T}{\frac{1}{t^{2\beta}}}\\
	&\ge \sum_{t=1}^{T}{\frac{\eta_0}{t^{\beta}}\E\{\norm{\hat{\rho}_t-\rho}_F^2\}} - \frac{2\eta_0^2}{1-2\eta_0} \left(\frac{d^2-1}{Nd}\right)\sum_{t=1}^{\infty}\frac{1}{t^{2\beta}}\\			
	&= -\frac{2\eta_0^2}{1-2\eta_0}\left(\frac{d^2-1}{Nd}\right)\zeta(2\beta) + \sum_{t=1}^{T}{\frac{\eta_0}{t^{\beta}}\E\{\norm{\hat{\rho}_t-\rho}_F^2\}}\\	
	&\ge -\frac{2\eta_0^2}{1-2\eta_0}\left(\frac{d^2-1}{Nd}\right)\zeta(2\beta)+ \sum_{t=1}^{T}{\frac{\eta_0}{T^{\beta}}\E\{\norm{\hat{\rho}_t-\rho}_F^2\}},
\end{align}
where $\zeta(\cdot)$ is the Riemann zeta function. Now, let $\epsilon_t = \E\{\norm{\hat{\rho}_t-\rho}_F^2\}$, $\delta_t= \frac{1}{t^{\alpha+\gamma}}$, and $\gamma = \frac{2}{3}(1-\alpha-\beta)$. Notice that $\alpha+\beta+\gamma <1$ as long as $\alpha+\beta<1$. Define the set
\begin{align}
	\mathcal{T}_{\delta}:=\left\{t\in\{1,2,..T\}: \epsilon_t \ge \delta_t\right\}.
\end{align}
Rearranging the terms in the inequality we get
\begin{align}
	 \frac{d^2-1}{\eta_0 d} \left(\log{d} +\frac{2}{N}\frac{\eta_0^2}{1-2\eta_0}\zeta(2\beta)\right) &\ge \frac{1}{T^{\beta}}\sum_{t=1}^{T}{\epsilon_t}\\
	 &\ge \frac{1}{T^{\beta}}\sum_{t=1}^{T}{\left(\mathbbm{1}_{\epsilon_t >\delta_t}\right)\epsilon_t}\\
	 &\ge \frac{1}{T^{\beta}}  \sum_{t=1}^{T}{\left(\mathbbm{1}_{\epsilon_t > \delta_t}\right)\frac{1}{t^{\alpha+\gamma}}}\\
	 &\ge \frac{1}{T^{\beta}}  \sum_{t=1}^{T}{\left(\mathbbm{1}_{\epsilon_t >\delta_t}\right)\frac{1}{T^{\alpha+\gamma}}}\\
	 &\ge \frac{|\mathcal{T}_{\delta}|}{T^{\alpha+\beta+\gamma}}.
\end{align}
In other words, the ratio between the number of iterations in which $\E\{\norm{\hat{\rho}_t-\rho}_F^2\} \ge \frac{1}{t^{\alpha+\gamma}}$ and the total number of iterations $T$ we performed so far is bounded by 
\begin{align}
	\frac{|\mathcal{T}_{\delta}|}{T} \le K T^{\alpha+\beta+\gamma-1},
\end{align}
where $K:=\frac{d^2-1}{\eta_0 d} \left(\log{d} +\frac{2}{N}\frac{\eta_0^2}{1-2\eta_0}\zeta(2\beta)\right)$. This implies that 
\begin{align}
	\lim_{T\to\infty}{\left(\frac{T_{\delta}}{T}\right)}=0,
\end{align}
because $\alpha+\beta+\gamma<1$. This means that increasing the number of iterations results in decreasing the number of times where the estimate was not accurate enough. Let's state this formally. Assuming we do a total number of iterations $T$, then if we select at random a fixed time step $1\le \tilde{t} \le T$, then there will be two possible outcomes. Either $\epsilon_{\tilde{t}} \le \delta_{\tilde{t}}$ or $\epsilon_{\tilde{t}}> \delta_{\tilde{t}}$. Assume we get the first outcome, then by applying Markov's inequality,
\begin{align}
	\epsilon_{\tilde{t}} \le\delta_{\tilde{t}} \implies \Pr\left\{ \norm{\hat{\rho}_{\tilde{t}}-\rho}_F^2\} \ge \frac{1}{\tilde{t}^{\alpha}} \right\} &\le \E\left\{\norm{\hat{\rho}_{\tilde{t}}-\rho}_F^2\right\} \tilde{t}^{\alpha}\\
	&\le \delta_{\tilde{t}}\tilde{t}^{\alpha}\\
	&=\tilde{t}^{-\gamma}.
\end{align}
Now, we can find the joint probability
\begin{align}
	\Pr_{t, \hat{\rho}_t}\left\{ \norm{\hat{\rho}_t-\rho}_F^2 \ge \frac{1}{t^{\alpha}}\right\} &= \Pr\left\{\norm{\hat{\rho}_{t}-\rho}_F^2\} \ge \frac{1}{t^{\alpha}} \middle|t=\tilde{t} \right\}\Pr \left\{ \tilde{t} \in \mathcal{T}_{\delta} \right\} \nonumber \\ &+ \Pr \left\{ \norm{\hat{\rho}_{t}-\rho}_F^2\} \ge \frac{1}{t^{\alpha}} \middle|t=\tilde{t}\right\} \Pr \left\{\tilde{t} \not \in \mathcal{T}_{\delta} \right\} \\
	&\le \Pr \left\{ \tilde{t} \in \mathcal{T}_{\delta} \right\} + \sum_{\tilde{t}\not \in \mathcal{T}_{\delta}}{\Pr \left\{ \norm{\hat{\rho}_{t}-\rho}_F^2\} \ge \frac{1}{t^{\alpha}} \middle|t=\tilde{t}\right\}\frac{1}{T}} \\
	&\le \frac{|\mathcal{T}_{\delta}|}{T} + \frac{1}{T} \sum_{\tilde{t}=1}^{T}{\Pr \left\{ \norm{\hat{\rho}_{t}-\rho}_F^2\} \ge \frac{1}{t^{\alpha}} \middle|t=\tilde{t}\right\}} \\ 
	&\le \frac{|\mathcal{T}_{\delta}|}{T} + \frac{1}{T} \sum_{\tilde{t}=1}^{T}{\tilde{t}^{-\gamma}} \\ 
	&\le \frac{|\mathcal{T}_{\delta}|}{T} + \frac{1}{T} \sum_{\tilde{t}=1}^{T}{\left(\frac{1}{\tilde{t}^2}\right)^{\frac{\gamma}{2}}}. 
\end{align}
Applying Jensen's inequality on the second term (noting that $f(x)=x^r$ is a concave function for $0<r<1$). Thus, 
\begin{align}
	\Pr_{t, \hat{\rho}_t}\left\{ \norm{\hat{\rho}_t-\rho}_F^2 \ge \frac{1}{t^{\alpha}}\right\} &=\frac{|\mathcal{T}_{\delta}|}{T} + \left(\frac{1}{T} \sum_{\tilde{t}=1}^{T}{\frac{1}{\tilde{t}^2}}\right)^{\frac{\gamma}{2}} \\ 
	&\le \frac{|\mathcal{T}_{\delta}|}{T} + \left(\frac{1}{T} \sum_{\tilde{t}=1}^{\infty}{\frac{1}{\tilde{t}^2}}\right)^{\frac{\gamma}{2}}\\
	&=\frac{|\mathcal{T}_{\delta}|}{T} + \left(\frac{\pi^2}{6}\frac{1}{T}\right)^{\frac{\gamma}{2}}\\
	&\le \frac{|\mathcal{T}_{\delta}|}{T} + 2T^{-\frac{\gamma}{2}}
\end{align}
Therefore,
\begin{align}
	\Pr \left\{ \norm{\hat{\rho}_t-\rho}_F^2 < \frac{1}{t^{\alpha}}\right\} &\ge 1 - \frac{|\mathcal{T}_{\delta}|}{T} - 2T^{-\frac{\gamma}{2}} \\
	&\ge  1 - K T^{\alpha + \beta + \gamma -1} - 2T^{-\frac{\gamma}{2}}\\
	&= 1 -  K T^{-\frac{1-\alpha-\beta}{3}} - 2T^{-\frac{1-\alpha-\beta}{3}}\\
	&= 1 -  T^{-\frac{1-\alpha-\beta}{3}} \left(K + 2\right).
\end{align}
Now, let 
\begin{equation}
	T_0 = \left(\frac{K+2}{\delta}\right)^{\frac{3}{1-\alpha-\beta}},
\end{equation}
then, if choose $T > T_0$, then
\begin{equation}
	\delta \ge T^{-\frac{1-\alpha-\beta}{3}} \left(K + 2\right),
\end{equation}	
or,
\begin{align}
	1 - \delta &\le 1- T^{-\frac{1-\alpha-\beta}{3}} \left(K + 2\right)\\
	&\le \Pr \left\{ \norm{\hat{\rho}_t-\rho}_F^2 < \frac{1}{t^{\alpha}}\right\}.
\end{align}	
Now, taking the limit as $T\to\infty$ we obtain finally that,
\begin{align}
	\lim_{T\to\infty} {\Pr_{t, \hat{\rho}_t} {\left\{\norm{\hat{\rho}_t-\rho}_F^2 \ge \frac{1}{t^{\alpha}}\right\} } }= 0,
\end{align} 
or, equivalently,
\begin{align}
	\lim_{T\to\infty} {\Pr_{t, \hat{\rho}_t} {\left\{\norm{\hat{\rho}_t-\rho}_F^2 < \frac{1}{t^{\alpha}}\right\} } }= 1.
\end{align} 
\end{proof}
\end{theorem}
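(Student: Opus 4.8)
The plan is to mirror the structure of the noise-free main theorem, but to start from the full main inequality (Lemma~\ref{lem:Lt_inequ}) rather than its noiseless corollary, since here $L_t(\rho)\neq0$ because of measurement noise. First I would set the comparison state $\sigma=\rho$ in Lemma~\ref{lem:Lt_inequ} and take expectations in the order dictated by the problem: first over the outcome $\hat{y}_t$, then over the measurement choice $X_t$, using that $\hat{\rho}_t$ is independent of the current data point. Applying Lemma~\ref{lem:E_Lt_noisy} to both $\E_t\{L_t(\hat{\rho}_t)\}$ and $\E_t\{L_t(\rho)\}$, the common noise contribution $\tfrac{d-\norm{\rho}_F^2}{N}$ partially cancels, and using the identity $\eta_t-\tfrac{\eta_t}{1-2\eta_t}=-\tfrac{2\eta_t^2}{1-2\eta_t}$ leaves, after also taking the expectation over the whole past history and multiplying by $\tfrac{d^2-1}{d}$,
\begin{equation}
  \eta_t\,\E\{\norm{\hat{\rho}_t-\rho}_F^2\} - \frac{2\eta_t^2}{1-2\eta_t}\left(\frac{d-\norm{\rho}_F^2}{N}\right) \le \frac{d^2-1}{d}\,\E\{D(\rho||\hat{\rho}_{t}) - D(\rho||\hat{\rho}_{t+1})\}.
\end{equation}
Bounding the purity by $d-\norm{\rho}_F^2\le\tfrac{d^2-1}{d}$ converts the state-dependent noise term into a universal constant; this is precisely the step that removes the dependence on the unknown $\rho$ that is present in the learning rate of Theorem~\ref{lem:weak_conv_noisy}.

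Next I would sum this telescoping inequality from $t=1$ to $T$. The relative-entropy differences collapse to at most $D(\rho||\hat{\rho}_1)\le\log d$ (using $\hat{\rho}_1=I_d/d$), while substituting $\eta_t=\eta_0 t^{-\beta}$ and using $t^{2\beta}-2\eta_0 t^{\beta}\ge(1-2\eta_0)t^{2\beta}$ for $t\ge1$ lets me replace the accumulated noise contribution by a convergent $p$-series, $\sum_t t^{-2\beta}\le\zeta(2\beta)$. \emph{This is exactly where the hypothesis $\beta>\tfrac12$ is essential}: it guarantees $2\beta>1$, so the variance contributions are summable and the zeta bound is finite. Lower-bounding $\eta_t\ge\eta_0 T^{-\beta}$ on the remaining signal term then yields the clean bound $\sum_{t=1}^{T}\E\{\norm{\hat{\rho}_t-\rho}_F^2\}\le K\,T^{\beta}$ with $K=\tfrac{d^2-1}{\eta_0 d}\big(\log d+\tfrac{2}{N}\tfrac{\eta_0^2}{1-2\eta_0}\zeta(2\beta)\big)$.

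From here the argument is identical in spirit to the noise-free main theorem. I would set $\gamma=\tfrac23(1-\alpha-\beta)$, define the bad set $\mathcal{T}_\delta=\{t\le T:\E\{\norm{\hat{\rho}_t-\rho}_F^2\}\ge t^{-(\alpha+\gamma)}\}$, and read off $|\mathcal{T}_\delta|/T\le K\,T^{\alpha+\beta+\gamma-1}$ from the summed bound. Drawing $t$ uniformly from $\{1,\dots,T\}$, I would split the probability according to whether $t\in\mathcal{T}_\delta$, apply Markov's inequality on the complement (giving per-$t$ failure probability $\le t^{-\gamma}$), and control $\tfrac1T\sum_t t^{-\gamma}$ via Jensen's inequality through $t^{-\gamma}=(t^{-2})^{\gamma/2}$ together with $\sum_t t^{-2}=\pi^2/6$, producing a $2T^{-\gamma/2}$ term. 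The chosen $\gamma$ makes both exponents coincide at $-\tfrac{1-\alpha-\beta}{3}$, so the total failure probability is at most $(K+2)T^{-(1-\alpha-\beta)/3}$; setting $T_0$ to the stated value makes this smaller than $\delta$ for every $T>T_0$, and letting $T\to\infty$ delivers the limiting statement.

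I expect the main obstacle to be the careful bookkeeping of the noise term and the balancing of the two constraints on $\beta$: it must be large enough ($\beta>\tfrac12$) for the accumulated variance to sum to a finite constant, yet small enough ($\beta<1-\alpha$, i.e.\ $\alpha+\beta<1$) that the signal exponent $1-\alpha-\beta$ stays positive and the averaged bound still decays to zero. Getting the order of expectations right---over $\hat{y}_t$ before $X_t$ before the history, so that Lemmas~\ref{lem:E_Lt} and~\ref{lem:EVar} apply with $\hat{\rho}_t$ treated as fixed---is the other place where care is required.
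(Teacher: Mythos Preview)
Your proposal is correct and follows essentially the same route as the paper's own proof: start from Lemma~\ref{lem:Lt_inequ} with $\sigma=\rho$, take expectations and apply Lemma~\ref{lem:E_Lt_noisy}, bound the purity term, telescope, substitute $\eta_t=\eta_0 t^{-\beta}$ and use $\beta>\tfrac12$ to control the noise sum by $\zeta(2\beta)$, then proceed with the bad-set and Markov/Jensen argument exactly as in the noiseless theorem with $\gamma=\tfrac23(1-\alpha-\beta)$. Your emphasis on why the two constraints on $\beta$ are both needed and on the order of expectations is accurate and matches the paper's reasoning.
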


\subsection{Convergence of MEG in the noisy case with averaging}
As discussed previously, doing the running-average over the measurements with a small number of shots is equivalent to increasing the number of shots without having to do this experimentally per each measurement. This method also does not require the use of an adaptive learning rate. So, given the data point $(X_t,\hat{y}_t)$, we calculate the running average $\bar{y}_t$: 

\begin{align}
	\bar{y}_t = \frac{\hat{y}_{r_1}+\hat{y}_{r_2}+...\hat{y}_{r_{n-1}}+\hat{y}_{t}}{n_{X_t}}=\frac{(n_{X_t}-1)\bar{y}_{t-1}+\hat{y}_t}{n_{X_t}},
\end{align}
such that $\{r_i\}_{i=1}^{n_{X_t}} = \{t^\prime:X_{t^\prime}=X_t\}$ are the time indices in which the measurement operator $X_t$ appeared before (which means that $r_n = t$), and $n_{X_t}$ is the number of times it appeared until time $t$. If we are choosing the measurement operators randomly then after enough number of iterations we may assume that we visited all operators the same number of iterations. So as $t\to\infty, n_{X_t}\to\infty$. Now from the strong law of large numbers:
\begin{align}
	\frac{1}{n_{X_t}}\sum_{i=1}^{n_{X_t}}{\hat{y}_{r_j}} \to \E\{\hat{y}_t\} = y_t\quad a.s.
\end{align}
So the gradient of the loss function satisfies that: 
\begin{align}
	\nabla L_t(\hat{\rho}_t)=2(\tr(\hat{\rho}_t X_t)-\bar{y}_t)X_t \quad \to \quad  2(\tr(\hat{\rho}_t X_t)-\tr(\rho X_t))X_t \quad a.s.
\end{align} 
In other words, after enough number of iterations, the situation becomes similar to the noise-free measurements case which allows the possibility of convergence to the true state with a constant learning rate. 

\section{Simulation results \label{sec:sims}}

This section discusses the methods and results of the numerical simulations. An overview of the simulations settings is given first, followed by discussion on the significance of the results. 

\subsection{Methods}
In order to assess the performance of the proposed method, we created a dataset consisting of 1000 randomly generated quantum states for 1-, 2-, 3-, 4-, and 5-qubit systems, as well as simulating 100000 random measurement outcomes for 10, 100, 1000, 10000 shots for each of these states. The estimate after each measurement is calculated, and compared to the true state using the infidelity measure defined as
\begin{align}
  1-F(\rho,\hat{\rho}_t)=1- \left( \tr\left| \sqrt{\rho\!\!\phantom{\hat{\rho}}}\sqrt{\hat{\rho}_t}\right| \right)^2.
\end{align}
Figure \ref{fig:eta} shows the behavior of MEG under different learning rates in the form $\eta_t = 0.5 t^{-\beta}$, compared to using the running average (RA) method with a constant learning rate. The plot shows that using the running average leads to the fastest convergence compared to the case of adaptive learning rate. So, we choose the RA method for further discussion.

\begin{figure}[t]
\centering
\includegraphics[width=0.55\textwidth]{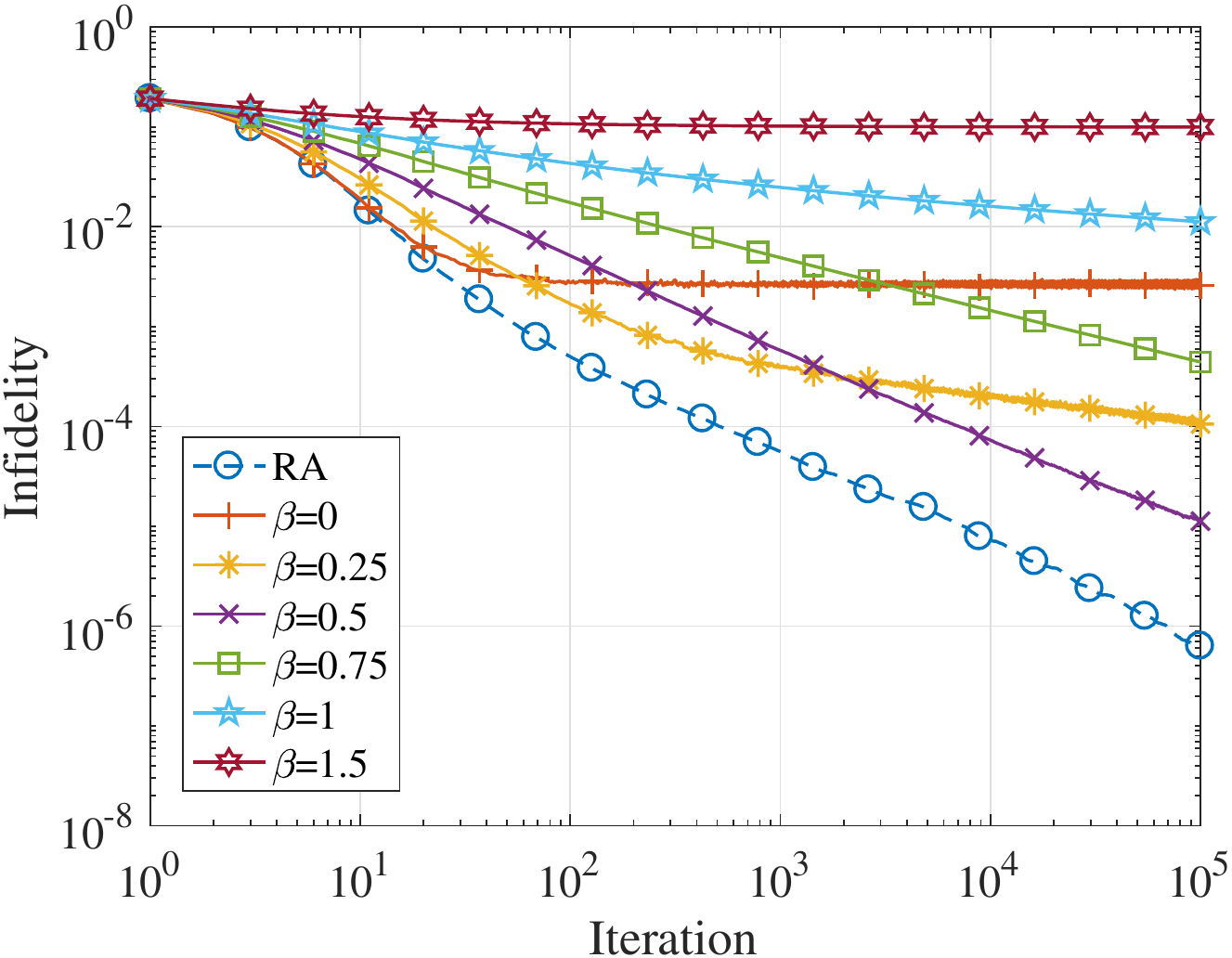}
\caption[]{Simulation results for MEG estimation for a single-qubit system and 100 shots per measurement. The infidelity of the proposed matrix exponential gradient (MEG) method is averaged over 1000 randomly generated quantum states and plotted versus the iteration number. The plot is for the running average case, as well as the variable learning rate $\eta_t = 0.5 t^{-\beta}$ for different values of $\beta$}
\label{fig:eta}
\end{figure}	

In addition to the matrix exponential gradient (MEG) estimator, the least squares (LS) method in \cite{qi_quantum_2013} and the diluted maximum likelihood (ML) in \cite{rehacek_diluted_2007} are also implemented and used for comparison in the setting of 1-, 2-, 3-, and 4-qubit systems. In these simulations, the learning rate of the MEG rule is 0.5. For the maximum likelihood method, the iteration step parameter $\epsilon$ (controlling the dilution) is taken to be 0.1. Since this value is much smaller than 1, it is guaranteed that after each internal iteration, the likelihood is increased as proved in \cite{rehacek_diluted_2007}. The number of internal ML iterations is chosen to be 10, which is a small number to reduce the total runtime of this method. In other words, for every new data point, we recalculate the ML estimate starting from the previous estimate using 10 iterations, and then evaluate the infidelity. An optimal setting would be a variable number of internal iterations that starts out large and decreases afterwards. However, it should be noted that in this work the objective is not optimizing the implementation of the ML, but to have the simplest implementation for comparison purpose. Additionally, we are interested more in the asymptotic behavior of the estimators. So, after a large number of data points, the estimate will be very near the true state. Consequently there will be no need to have a large number of ML internal iterations at that stage. The source code is publicly available \cite{sourcecode}. Figure \ref{fig:Results_multi} shows the infidelity versus the number of iterations for 1-, 2-, 3- and 4-qubits when the number of shots per measurement is taken to be 1000. Figure \ref{fig:Results_4} shows the performance for a 4-qubit system at different number of measurement shots. For 5-qubit systems, only the performance of MEG is assessed as shown in Figure \ref{fig:Results_5}.

\begin{figure*}[t]
\centering
\subfloat[10 shots per iteration]{\includegraphics[width=0.48\textwidth]{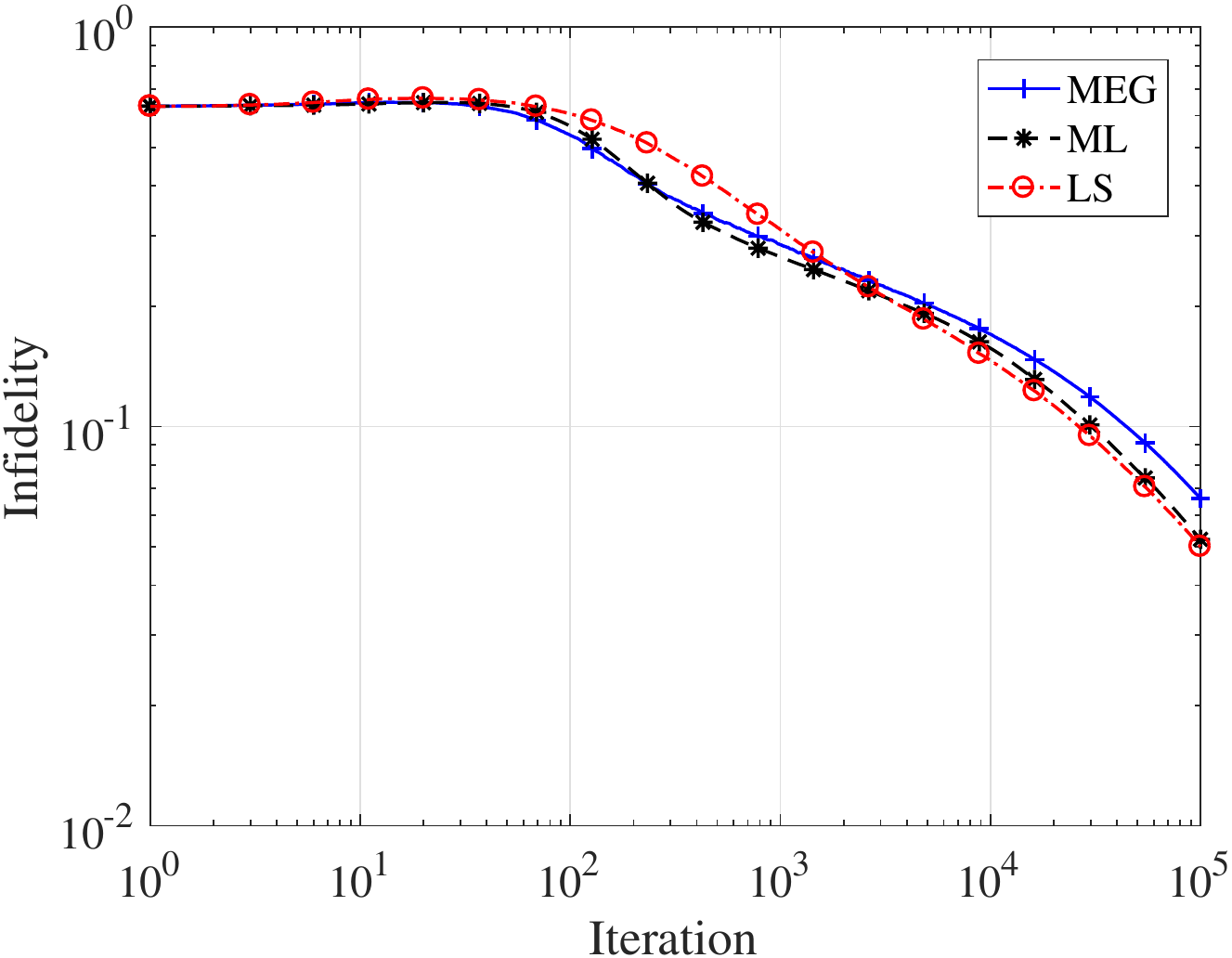}}
\subfloat[100 shots per iteration]{\includegraphics[width=0.48\textwidth]{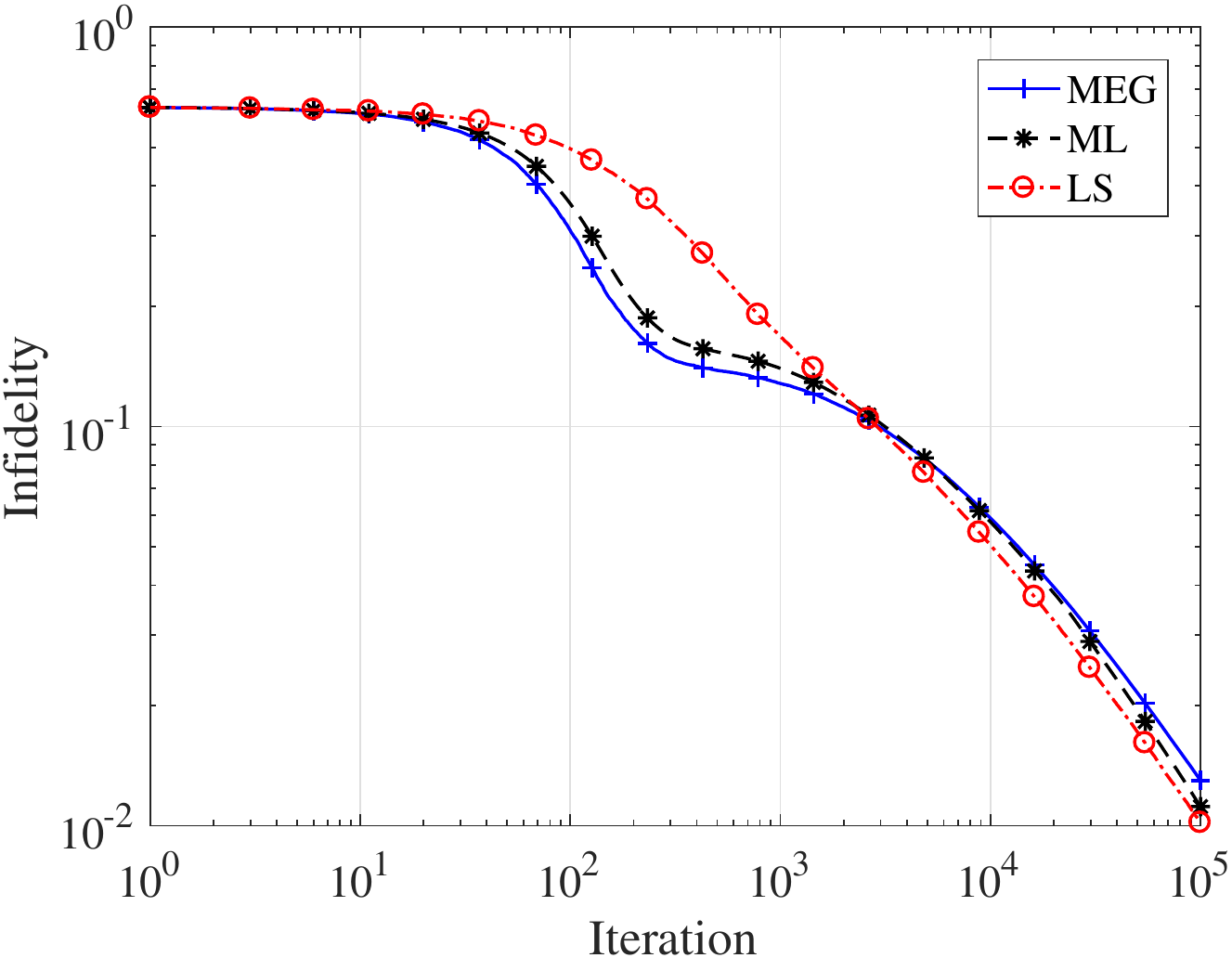}}\\
\subfloat[1000 shots per iteration]{\includegraphics[width=0.48\textwidth]{4_Q_3_nshots_loglog}}
\subfloat[10000 shots per iteration]{\includegraphics[width=0.48\textwidth]{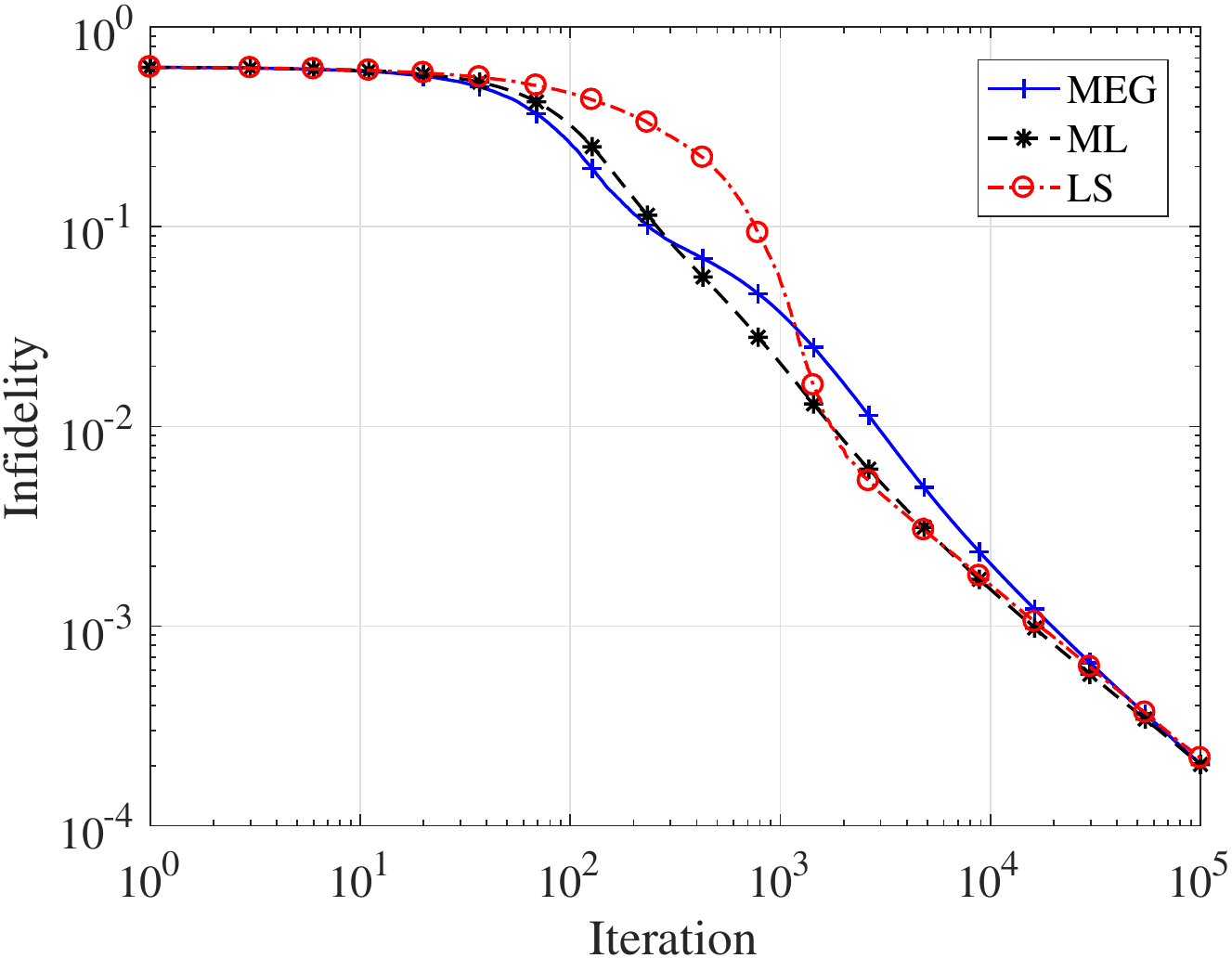}}

\caption[]{Simulation results for a four-qubit system. The infidelity is averaged over 1000 randomly generated quantum states and plotted versus the iteration number. The three lines correspond to the proposed matrix exponential gradient (MEG) method, maximum likelihood (ML) estimator and least-squares (LS) estimator. The number of shots per measurement is taken to be (a) 10 shots, (b) 100 shots, (c) 1000 shots, and (d) 10000 shots.}

\label{fig:Results_4}

\end{figure*}

\begin{figure}[t]
\centering
\includegraphics[width=0.55\textwidth]{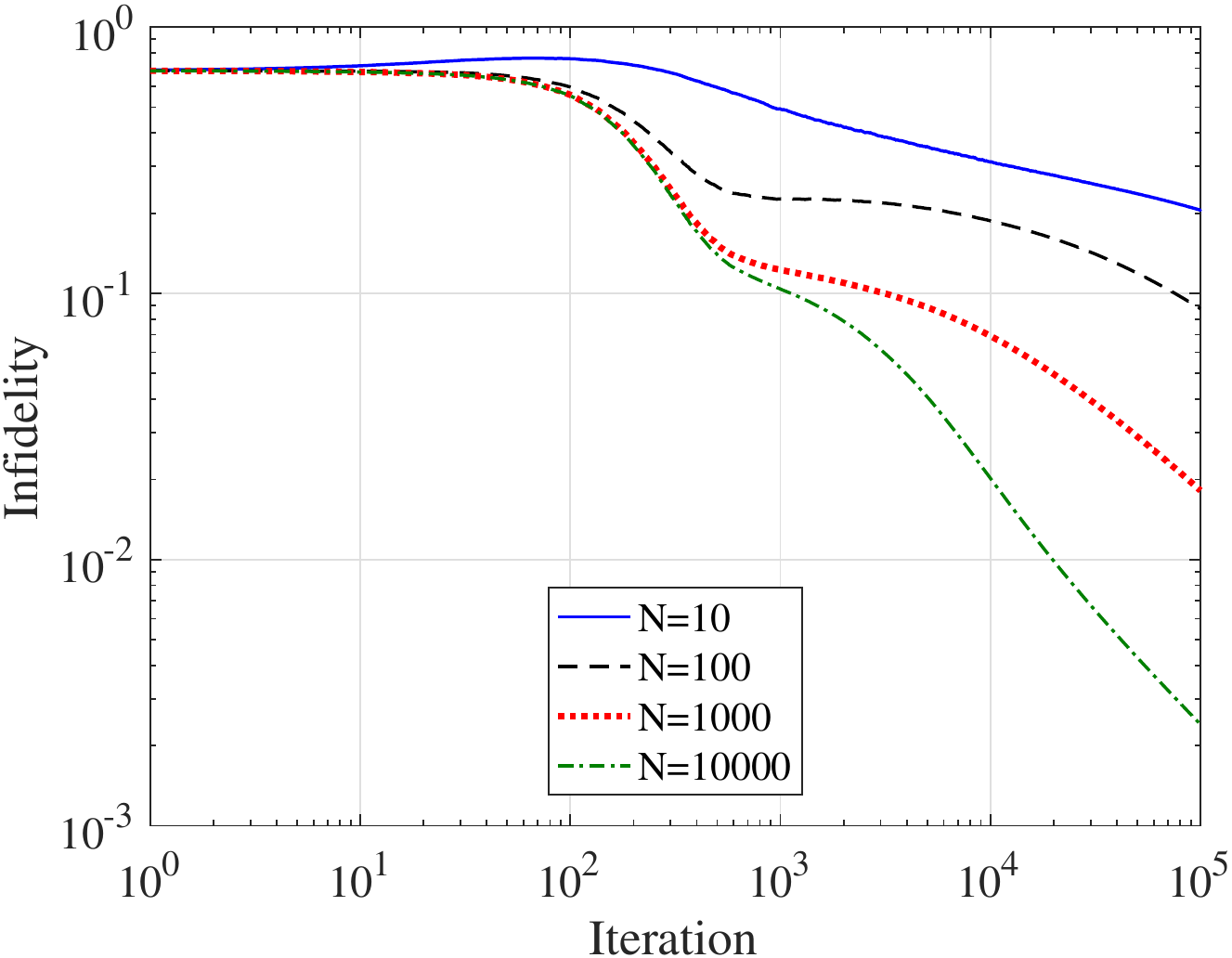}
\caption[]{Simulation results for a five-qubit system. The infidelity of the proposed matrix exponential gradient (MEG) method is averaged over 1000 randomly generated quantum states and plotted versus the iteration number. The number of shots per measurement is taken to be 10, 100, 1000, and 10000 shots.}
\label{fig:Results_5}
\end{figure}	
\subsection{Discussion \label{sec:discuss}}
The maximum likelihood method is a batch method that requires that the whole dataset is available for post-processing. So if a new measurement is done, the entire algorithm must be repeated again from the beginning. Additionally, the storage requirement of the data operators may be large, especially for multi-qubit systems. Our proposed method does not need to store all the data set, just the last averaged outcome for each measurement operator in the most sophisticated case. The same comparison applies to least-squares, which also acts on the whole batch of data and is not an online algorithm. 

An additional advantage of our algorithm is that it guarantees positivity of the estimated operator at all times. Least-squares and similar approaches are not guaranteed to produce a physical state unless a further step of projection back to the physical space is done. This forms an additional choice and overhead on the algorithm. Moreover, the use of running average allows using a constant learning rate. This solves the problem of having to evaluate the optimum learning rate at each time step.

Considering the accuracy of the estimate, the simulation results show that after a sufficient number of iterations, the MEG estimates converge to both the maximum-likelihood and least squares estimates which are considered the optimal estimators in batch processing systems. ML produces a point estimate for the model that maximizes the probability of the observed data, while LS minimizes the sum of squared errors due to observation noise. As the number of shots increase, the accuracy of all estimators gets better (i.e lower average infidelity for a given number of iterations) because the noise becomes less effective. On the other hand, as the number of qubits gets higher, more iterations are needed to achieve a low average infidelity. This is because at each iteration one basis is selected randomly for measurement. However, for high-dimensional systems there are many more bases that need to be covered to form a complete set ($d^2-1$ bases).  

As for complexity, maximum-likelihood scales as $O(d^4)$. This is because the bottleneck operation is calculating the gradient of the log-likelihood function $R=\sum_j{\frac{f_j}{N Pr_j}\Pi_j}$. For a complete set of measurement, at least $d^2-1$ measurement operators are needed, each of dimension $d\times d$. So this implies that calculating $R$ requires $O(d^4)$ complex multiplication operations. For the least-squares method, the complexity is $O(d^4)$ as discussed in \cite{qi_quantum_2013}. In this case the bottleneck operation is the matrix multiplication part $X^T Y$ of the estimation equation $\hat{\theta}=(X^{T}X)^{-1}X^T Y$. That is because again for a complete set of measurements we need at least $d^2-1$ operators, and thus $Y$ is of dimensions $(d^2-1)\times 1$, and $X$ is of dimensions $(d^2-1)\times (d^2-1)$. Finally, for the proposed method, the bottleneck is in calculating the matrix exponential. The complexity will depend on the particular way of implementation. The most common way is by performing eigendecomposition, followed by exponentiating the diagonal matrix of eigenvalues. In this case, the complexity is usually assumed to be $O(d^3)$ \cite{pan1999complexity, demmel2007fast}. It should be noted that the complexities discussed here are obtained per iteration, i.e for each update given a new data point. Table \ref{tab:complexity} summarizes these results. 

\begin{table}[!ht]
\centering
\caption{Summary of runtime complexities per iteration for the ML,LS, and MEG algorithms}
\label{tab:complexity}
\begin{tabular}{|c|c|}
\hline
\textbf{Algorithm} & \textbf{Runtime}\\ \hline
ML                 & $O(d^4)$    \\ \hline
LS                  & $O(d^4)$    \\ \hline
MEG               & $O(d^3)$    \\ \hline
\end{tabular}
\end{table}

In order to verify the claim that MEG should have the fastest performance, the execution times per 1 iteration were recorded in the simulation for the three methods. Figure \ref{fig:complexity} shows the average of these execution times. It is clear that as the number of qubits increases, MEG has the least runtime compared to the other two methods. 
\begin{figure}[t]
\includegraphics[width=0.55\textwidth]{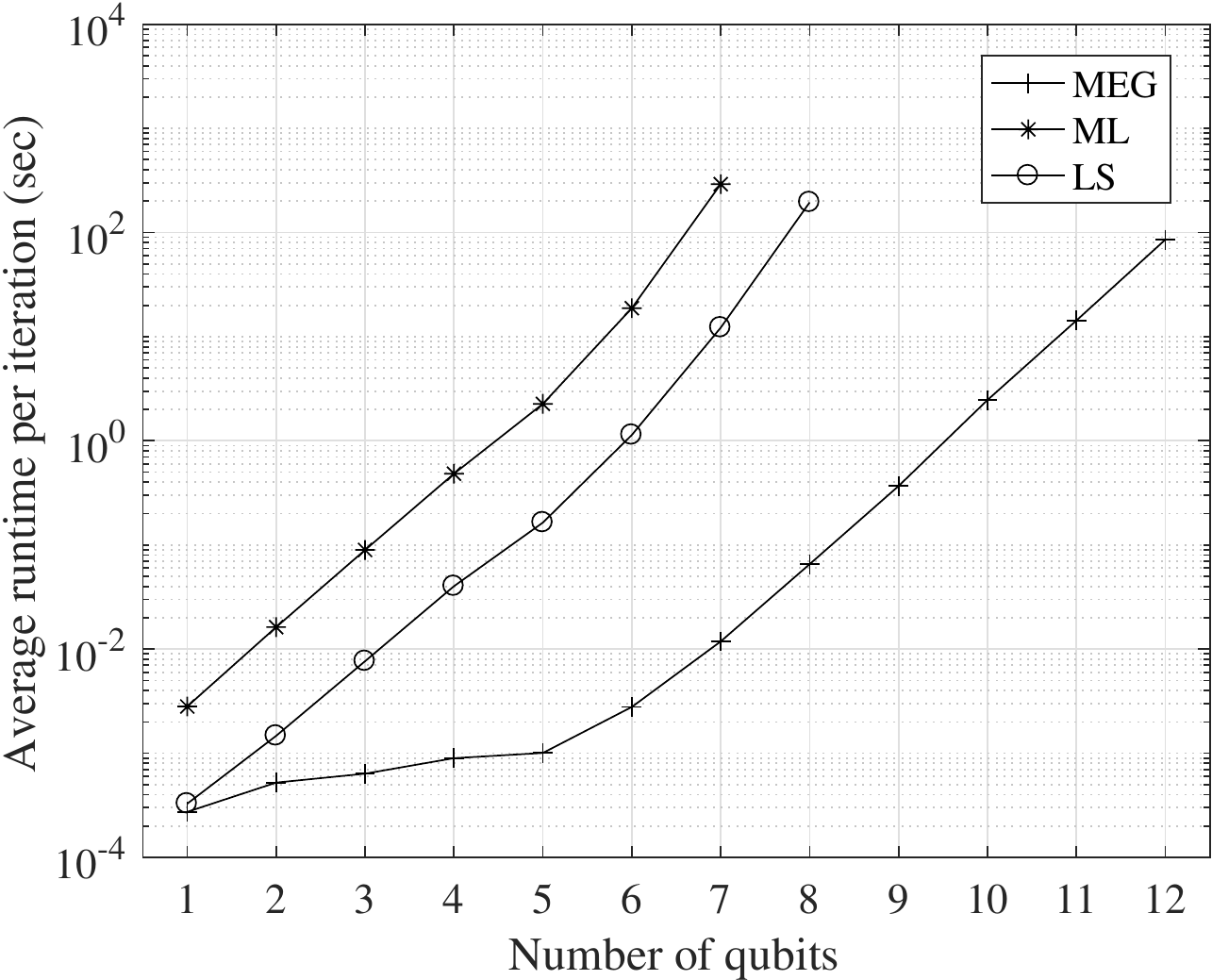}
\caption{The average runtime of the update step for the maximum-likelihood (ML), least-squares (LS), and matrix-exponentiated gradient (MEG) methods, measured for increasing number of qubits.}
\label{fig:complexity}
\end{figure} 
\section{Conclusion}
In this paper, we introduced the idea of using the running average on the noisy measurements together with the MEG update rule to construct a fast and simple online quantum state estimator. However, there are still some points to consider in the future. First, we considered only fixed measurements, but having adaptive measurements could further improve the performance. Also, it would be interesting to test these ideas while embedded in a real experiment. Finally, it would be interesting to explore other possible machine learning techniques in the classical literature, and investigate their applicability in the quantum setting. In particular, proving convergence for the projected-gradient method as another online estimation algorithm would be worth considering. Appendix \ref{sec:pgd} gives more details on this point.

\paragraph*{Acknowledgments:}
AY is supported by an Australian Government Research Training Program Scholarship. 
MT and CF acknowledge Australian Research Council Discovery Early Career Researcher Awards, projects No.\ DE160100821 and DE170100421, respectively. 
This research is also supported in part by the ARCLab facility at UTS.

\appendix

\section{Auxiliary lemmas}
\label{apndx:lem}
In this Appendix, we present some auxiliary lemmas needed for some proofs. We will start by stating the following lemma \cite{tsuda_matrix_2005}, which is proved as Lemma 1 in \cite{helmbold1997comparison}. 
\begin{lemma}
\label{lem:log}
Let $0\le q \le 1$, then for any $p$,
\begin{align}
	\log\left(1-q\left(1-\exp\left(p\right)\right)\right)\le p q+\frac{p^2}{8}.
\end{align}

\end{lemma}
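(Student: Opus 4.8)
The plan is to recognize the left-hand side as the cumulant generating function of a Bernoulli random variable and to bound it by a second-order Taylor expansion, which is the standard route to Hoeffding-type inequalities (this is essentially the content of Lemma~1 in \cite{helmbold1997comparison}). First I would rewrite the argument of the logarithm as
\begin{align}
1 - q\left(1 - \exp(p)\right) = (1-q) + q\exp(p) = \E\{\exp(pX)\},
\end{align}
where $X$ is the Bernoulli random variable taking the value $1$ with probability $q$ and $0$ with probability $1-q$ (note $0 \le q \le 1$ makes this a valid distribution). Thus the left-hand side is exactly $g(p) := \log \E\{\exp(pX)\}$, the log-moment-generating function of $X$, and the claim reduces to the textbook bound $g(p) \le \E\{X\}\, p + p^2/8$, with $\E\{X\} = q$.

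Next I would set up the Taylor expansion of $g$ about $p = 0$. A direct computation gives $g(0) = \log 1 = 0$ and $g'(p) = \frac{q\exp(p)}{(1-q)+q\exp(p)}$, so that $g'(0) = q$. Writing $\pi(p) := \frac{q\exp(p)}{(1-q)+q\exp(p)} \in [0,1]$ for the exponentially tilted probability, a second differentiation (quotient rule, using that numerator and denominator have the same derivative $q\exp(p)$) yields the clean identity $g''(p) = \pi(p)\left(1-\pi(p)\right)$, which is the variance of $X$ under the tilted measure. By Taylor's theorem with Lagrange remainder there exists $\xi$ between $0$ and $p$ with
\begin{align}
g(p) = g(0) + g'(0)\, p + \tfrac{1}{2} g''(\xi)\, p^2 = q p + \tfrac{1}{2}\pi(\xi)\left(1-\pi(\xi)\right)\, p^2.
\end{align}

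The only quantitative estimate needed, and the crux of the argument, is the uniform bound $g''(\xi) = \pi(\xi)\left(1-\pi(\xi)\right) \le \frac14$, which holds because $x(1-x)$ is maximized at $x = \frac12$ over $x \in [0,1]$. Substituting this into the remainder term gives $g(p) \le qp + \frac{p^2}{8}$, the assertion. I expect no genuine obstacle here: the work lies entirely in identifying the Bernoulli interpretation and in verifying that the second derivative equals the tilted variance so that the $\frac14$ bound applies, both of which are routine. As an alternative that avoids invoking Taylor's theorem explicitly, one could instead define $h(p) := qp + \frac{p^2}{8} - g(p)$ and show $h(p) \ge 0$ for all $p$ by checking $h(0) = h'(0) = 0$ together with $h''(p) = \frac14 - \pi(p)\left(1-\pi(p)\right) \ge 0$, so that $p=0$ is a global minimum of $h$; this uses the same $\frac14$ variance bound but packages it as a convexity argument.
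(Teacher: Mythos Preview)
Your proof is correct. The paper does not actually prove this lemma; it simply states it and cites Lemma~1 of \cite{helmbold1997comparison} for the proof. The argument you give\,---\,interpreting the left-hand side as the cumulant generating function of a Bernoulli random variable, expanding to second order, and invoking the uniform bound $\pi(1-\pi)\le\tfrac14$ on the tilted variance\,---\,is precisely the standard Hoeffding-lemma computation that underlies the cited result, so your proposal supplies exactly the proof the paper defers to the literature.
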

Next, we state the Golden-Thompson inequality \cite{golden1965lower, thompson1965inequality}.
\begin{lemma}[Golden-Thompson Inequality]
\label{lem:GT}
Let $A$ and $B$ be two Hermitian matrices, then
\begin{align}
	\tr(\exp(A+B))\leq \tr(\exp(A)\exp(B)).
\end{align}

\end{lemma}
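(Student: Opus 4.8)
The plan is to prove the Golden--Thompson inequality along the classical route of Golden and Thompson~\cite{golden1965lower, thompson1965inequality}, combining the Lie--Trotter product formula with an elementary trace power inequality. First I would invoke the Lie--Trotter formula: for Hermitian $A$ and $B$,
\begin{align}
	\exp(A+B) = \lim_{n\to\infty}\big(\exp(A/n)\exp(B/n)\big)^n.
\end{align}
Since the trace is continuous on the (finite-dimensional) space of matrices, taking traces shows that it suffices to upper bound $\tr\big((\exp(A/n)\exp(B/n))^n\big)$, and it is enough to do so along the subsequence $n=2^k$ before letting $k\to\infty$ (the full limit exists, so any subsequence converges to the same value).

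The heart of the argument is the following trace power inequality for positive semidefinite matrices $P$ and $Q$: for every $k\ge 0$,
\begin{align}
	\tr\big((PQ)^{2^k}\big)\le \tr\big(P^{2^k}Q^{2^k}\big). \label{eq:powertrace}
\end{align}
I would establish \eqref{eq:powertrace} by iterating the ``doubling'' estimate $\tr\big((PQ)^{2m}\big)\le \tr\big((P^2Q^2)^m\big)$; applying it successively with $m=2^{k-1},2^{k-2},\dots,1$ telescopes into
\begin{align}
	\tr\big((PQ)^{2^k}\big)\le \tr\big((P^2Q^2)^{2^{k-1}}\big)\le \cdots \le \tr\big(P^{2^k}Q^{2^k}\big).
\end{align}
Each doubling estimate follows from repeated application of the Cauchy--Schwarz inequality for the Hilbert--Schmidt inner product together with cyclicity of the trace: writing $C=(PQ)^m$ and using that $PQ$ is similar to the positive semidefinite matrix $P^{1/2}QP^{1/2}$ (so that all relevant traces are real and nonnegative), one first obtains $\tr(C^2)\le \tr(C^\dagger C)=\tr\big((QP)^m(PQ)^m\big)$, after which cyclic rearrangements---interleaved with further Cauchy--Schwarz steps---reduce the right-hand side to $\tr\big((P^2Q^2)^m\big)$. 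For the base case $m=1$ a single step suffices, giving $\tr(PQPQ)\le \tr(P^2Q^2)$, which is equivalent to the manifest positivity $\tr\big([P,Q]^\dagger[P,Q]\big)\ge 0$ with $[P,Q]=PQ-QP$.

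To finish, I would apply \eqref{eq:powertrace} with $P=\exp(A/n)$ and $Q=\exp(B/n)$ (both positive definite) and $n=2^k$. Since $P^n=\exp(A)$ and $Q^n=\exp(B)$, this yields $\tr\big((\exp(A/n)\exp(B/n))^n\big)\le \tr\big(\exp(A)\exp(B)\big)$ for every $k$. Letting $k\to\infty$ and using the Lie--Trotter formula on the left gives $\tr(\exp(A+B))\le \tr(\exp(A)\exp(B))$, as claimed.

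The main obstacle is the trace power inequality \eqref{eq:powertrace}, and specifically getting the doubling step exactly right: one must control the direction of the Cauchy--Schwarz bound, which is precisely why the reality and nonnegativity of the relevant traces---guaranteed by positive semidefiniteness via the similarity to $P^{1/2}QP^{1/2}$---is essential, and one must verify that the sequence of cyclic rearrangements genuinely converts $(QP)^m(PQ)^m$ into the form $(P^2Q^2)^m$ at each stage. Once this bookkeeping is handled, the remaining ingredients---the Trotter limit and continuity of the trace in finite dimensions---are routine. An alternative I would mention but not pursue is the complex-analytic proof via the Hadamard three-lines theorem, which bypasses the power inequality at the cost of more analytic machinery.
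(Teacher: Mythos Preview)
The paper does not actually prove Lemma~\ref{lem:GT}; it merely states the Golden--Thompson inequality with citations to the original references~\cite{golden1965lower, thompson1965inequality} and then invokes it in the proof of Lemma~\ref{lem:log_Zt}. So there is no argument in the paper against which to compare yours on a step-by-step basis.

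That said, your sketch follows the classical route and is essentially sound. The Lie--Trotter limit, the restriction to the dyadic subsequence $n=2^k$, and the telescoping of the doubling estimate are all correct. The one place where your writeup is somewhat loose is the doubling step $\tr\big((PQ)^{2m}\big)\le \tr\big((P^2Q^2)^m\big)$: the first Cauchy--Schwarz application indeed gives $\tr(C^2)\le \tr(C^\dagger C)$ with $C=(PQ)^m$, but the phrase ``cyclic rearrangements interleaved with further Cauchy--Schwarz steps'' hides real work. For $m=1$ a single step suffices (as you note, via $\tr([P,Q]^\dagger[P,Q])\ge 0$). For $m=2$ one can finish with a second Cauchy--Schwarz step after observing that $\tr\big((QP)^2(PQ)^2\big)=\tr\big((QP^2Q)(PQ^2P)\big)$ and $\tr\big((QP^2Q)^2\big)=\tr\big((PQ^2P)^2\big)=\tr\big((P^2Q^2)^2\big)$. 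For general $m$ the cleanest justification is not purely ``cyclicity plus Cauchy--Schwarz'' but rather the singular-value argument: with $R=PQ$ one has $\tr\big((QP)^m(PQ)^m\big)=\|R^m\|_{HS}^2$ and $\tr\big((P^2Q^2)^m\big)=\sum_i \sigma_i(R)^{2m}$, and the inequality $\sum_i \sigma_i(R^m)^2 \le \sum_i \sigma_i(R)^{2m}$ follows from Horn's log-majorization $\sigma(R^m)\prec_{\log}\sigma(R)^m$. You already flag this step as the main obstacle, which is exactly right; just be aware that a fully rigorous version typically invokes a majorization fact rather than only Cauchy--Schwarz.

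In short: the paper treats Golden--Thompson as a black box, whereas you supply an outline of an actual proof. Your outline is the standard one and would go through once the doubling step is made precise.
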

The following result is presented as Lemma 2.1 in \cite{tsuda_matrix_2005}.
\begin{lemma}[Jensen's Inequality]
\label{lem:Jn}
Let  $0 \le A \le I$, and $x,y \ge 0$, then 
\begin{align}
	\exp(x A+ y (I-A)) &\leq \exp(x)A+\exp(y)(I-A)\\
	&= I \exp(y)+(\exp(x)-\exp(y))A.
\end{align}

\end{lemma}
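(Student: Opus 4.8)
The plan is to reduce this operator inequality to an ordinary scalar inequality by diagonalizing $A$, and then invoke plain convexity of the exponential function. The crucial feature that makes this possible---and what distinguishes the situation from the Golden--Thompson setting of Lemma~\ref{lem:GT}---is that the two matrices $xA$ and $y(I-A)$ in the exponent \emph{commute}, since both are functions of the single Hermitian operator $A$. Consequently no noncommutativity issues arise, the matrix exponential acts diagonally in the eigenbasis of $A$, and the problem decouples eigenvalue-by-eigenvalue.

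Concretely, I would first write the spectral decomposition $A = \sum_i \lambda_i\, | v_i \rangle\!\langle v_i |$, where $\{| v_i \rangle\}$ is an orthonormal eigenbasis; because $0 \le A \le I$, every eigenvalue satisfies $\lambda_i \in [0,1]$. In this basis $I - A = \sum_i (1-\lambda_i)\, | v_i \rangle\!\langle v_i |$, so the exponent is diagonal and the matrix exponential evaluates separately on each eigenvalue:
\begin{align}
	\exp\big(xA + y(I-A)\big) = \sum_i \exp\big(\lambda_i x + (1-\lambda_i) y\big)\, | v_i \rangle\!\langle v_i |.
\end{align}
The claimed upper bound is diagonal in the very same basis,
\begin{align}
	\exp(x) A + \exp(y)(I-A) = \sum_i \big(\lambda_i \exp(x) + (1-\lambda_i)\exp(y)\big)\, | v_i \rangle\!\langle v_i |,
\end{align}
so both sides are simultaneously diagonalized. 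The operator inequality $\exp(xA+y(I-A)) \le \exp(x)A + \exp(y)(I-A)$ is therefore equivalent to the family of scalar comparisons of eigenvalues, namely that for every $i$,
\begin{align}
	\exp\big(\lambda_i x + (1-\lambda_i) y\big) \le \lambda_i \exp(x) + (1-\lambda_i)\exp(y).
\end{align}

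I would close the argument by recognizing the last display as exactly the convexity of $t \mapsto \exp(t)$ (equivalently $\exp'' = \exp > 0$), applied at the convex combination of $x$ and $y$ with weights $\lambda_i, 1-\lambda_i \in [0,1]$; this is the scalar Jensen inequality, which is why the lemma carries that name. The second line of the statement is then a trivial algebraic rearrangement, $\exp(x)A + \exp(y)(I-A) = \exp(y)\,I + (\exp(x)-\exp(y))\,A$. There is essentially no hard step here: the only point requiring care is the observation that all three operators share the eigenbasis of $A$, so that the matrix inequality genuinely reduces to the scalar one; I would note in passing that the hypotheses $x,y \ge 0$ are not even needed for this particular inequality---only $\lambda_i \in [0,1]$ is used---though they are natural in the context where the lemma is applied.
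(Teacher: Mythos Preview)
Your argument is correct. The paper does not actually supply a proof of this lemma; it simply cites it as Lemma~2.1 of \cite{tsuda_matrix_2005}. Your diagonalization argument---reducing to the scalar Jensen inequality for $\exp$ via the spectral decomposition of $A$, using that $xA$ and $y(I-A)$ commute---is exactly the standard proof and is complete as written. Your side remark that the hypothesis $x,y \ge 0$ is unnecessary for the inequality itself is also correct.
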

Finally, we state the following lemma relating convergence in mean to convergence in probability. 

\begin{lemma}
\label{lem:prob}
Given a sequence of positive random variables $Z_t$,
\begin{align}
	\lim_{t\to\infty}{\E\{Z_t\}}=0 \implies \forall \delta>0, \lim_{t\to\infty}{\Pr\left\{Z_t \le\delta\right\}}=1.
\end{align} 
\end{lemma}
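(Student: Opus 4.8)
The plan is to prove this via a single application of Markov's inequality, since the statement is precisely the elementary fact that convergence in mean (here, of the expectation to zero) implies convergence in probability for nonnegative random variables. The positivity hypothesis $Z_t \ge 0$ is exactly what licenses Markov's inequality, so I would flag at the outset that this assumption is doing the essential work.

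Concretely, I would fix an arbitrary $\delta > 0$ and apply Markov's inequality to the nonnegative random variable $Z_t$ at threshold $\delta$, giving
\begin{align}
	\Pr\{Z_t > \delta\} \le \frac{\E\{Z_t\}}{\delta}.
\end{align}
Taking the limit $t \to \infty$ and invoking the hypothesis $\lim_{t\to\infty} \E\{Z_t\} = 0$, the right-hand side vanishes, so $\lim_{t\to\infty} \Pr\{Z_t > \delta\} = 0$. Passing to complementary events then yields
\begin{align}
	\lim_{t\to\infty} \Pr\{Z_t \le \delta\} = 1 - \lim_{t\to\infty} \Pr\{Z_t > \delta\} = 1,
\end{align}
which is the claim, and since $\delta > 0$ was arbitrary the conclusion holds for all $\delta > 0$.

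There is no genuine obstacle here: the argument is a one-line consequence of Markov's inequality and the divergence-to-zero of the mean. The only subtlety worth stating explicitly is that Markov's inequality requires $Z_t \ge 0$, which is guaranteed by the hypothesis that the $Z_t$ are positive random variables; in our application to Theorem~\ref{lem:weak_conv_noiseless} this is automatic since $Z_t = \norm{\hat{\rho}_t - \rho}_F^2$ is a squared norm and hence nonnegative.
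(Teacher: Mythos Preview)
Your proposal is correct and follows essentially the same approach as the paper: both arguments reduce immediately to Markov's inequality applied to the nonnegative random variable $Z_t$ at threshold $\delta$, then pass to the complementary event. The paper's version unpacks the $\epsilon$--$\delta$ definition of the limit explicitly while you take the limit directly, but this is a purely stylistic difference and the underlying idea is identical.
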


\begin{proof}
The statement 
\begin{align}
	\lim_{t\to\infty}{\E\{Z_t\}}=0.
\end{align}
is equivalent to the statement
\begin{align}
	\forall \epsilon>0,\delta >0, \exists T_\delta: \forall t>T_\delta, \E\{Z_t\}\le\delta\epsilon.
\end{align}
Now, Markov inequality states that for a non-negative random variable $X$, 
\begin{align}
	\pr\{X\ge a\}\le\frac{\E\{X\}}{a}.
\end{align}
So, the previous definition of the limit becomes
\begin{align}
	\forall \epsilon>0,\delta >0, \exists T_\delta: \forall t>T_\delta, \Pr\left\{Z_t\ge\delta\right\} \le \epsilon,
\end{align}
or,
\begin{align}
	\forall \epsilon>0,\delta >0, \exists T_\delta: \forall t>T_\delta, \Pr\left\{Z_t < \delta\right\} \ge 1-\epsilon.
\end{align}
Writing back as a limit, the expression becomes
\begin{align}
	\forall \delta>0, \lim_{t\to\infty}{\Pr\left\{Z_t < \delta\right\}}=1,
\end{align}
which is the definition of convergence in probability. 
\end{proof}

\section{Additional proofs}\label{appendix:proofs}
This appendix lists proofs of some lemmas that were not given in the main text. The proofs follow the same methods in \cite{tsuda_matrix_2005}, generalized to work with the quantum case. 
\tocless\subsection{Proof of Lemma \ref{lem:log_Zt}}
\label{apndx:lem:log_Zt}
\logzt*
\begin{proof}
Recall that
\begin{align}
	\delta_t=-2\eta(\tr(\hat{\rho}_t X_t)-\hat{y}_t).
\end{align}
Applying Golden-Thompson inequality in Lemma \ref{lem:GT}, we get
\begin{align}
	\log(\tr(\exp(\log(\hat{\rho}_t)+\delta_t X_t))) &\leq \log(\tr(\hat{\rho}_t \exp(\delta_t X_t)))\\
	&=\log\left(\tr\left(\hat{\rho}_t\exp(-\delta_t)\exp\left(2 \delta_t \frac{X_t + I}{2}\right)\right)\right)\\
	&=- \delta_t + \log\left(\tr\left(\hat{\rho}_t\exp\left(2 \delta_t \frac{X_t + I}{2}\right)\right)\right).
\end{align}
Applying Jensen's inequality in Lemma \ref{lem:Jn} by choosing $A=\frac{X_t+ I}{2}$, $x=2\delta_t$, and $y=0$; we get that
\begin{align}
	\log(\tr(\exp(\log(\hat{\rho}_t)+\delta_t X_t))) &\le - \delta_t + \log\left(\tr\left(\hat{\rho}_t\left(I-\left(1-\exp(2\delta_t)\right)\frac{X_t + I}{2}\right)\right)\right) \\
	&= - \delta_t + \log\left(1-\left(1-\exp(2\delta_t)\right)\frac{\tr\left(\hat{\rho}_t(X_t + I)\right)}{2}\right)\\
	&=- \delta_t + \log\left(1-\left(1-\exp(2\delta_t)\right)\frac{\tr\left(\hat{\rho}_t X_t\right) + 1}{2}\right). 
\end{align}
Applying now the log identity in Lemma \ref{lem:log}, with $p=2\delta_t$, and $q=\frac{\tr\left(\hat{\rho}_t X_t\right) +1}{2}$, we obtain 
\begin{align}
	\log(\tr(\exp(\log(\hat{\rho}_t)+\delta_t X_t))) &\le - \delta_t +\frac{(2\delta_t)^2 }{8}+2\delta_t\frac{\tr\left(\hat{\rho}_t X_t\right) +1 }{2}\\
	&=\frac{\delta_t^2 }{2} +\delta_t\tr\left(\hat{\rho}_t X_t\right),
\end{align}
which completes the proof.
\end{proof}
\tocless\subsection{Proof of Lemma \ref{lem:Lt_inequ}}
\label{apndx:lem:Lt_inequ}
\ltinequ*
\begin{proof}
We start with calculating the right hand side, 
\begin{align}
	D(\sigma||\hat{\rho}_t)-D(\sigma||\hat{\rho}_{t+1})&=\tr(\sigma\log(\sigma)-\sigma\log(\hat{\rho}_t))-\tr(\sigma\log(\sigma)-\sigma\log(\hat{\rho}_{t+1}))\\
	&=-\tr(\sigma\log(\hat{\rho}_t))+\tr(\sigma\log(\hat{\rho}_{t+1}))\\
	&=-\tr(\sigma\log(\hat{\rho}_t))+\tr(\sigma\log(\hat{\rho}_{t}))+\tr(\sigma\delta_t X_t)\nonumber \\
	&-\tr(\sigma \log(\tr(\exp(\log(\hat{\rho}_t)+\delta_t X_t))))\\
	&=\delta_t \tr(\sigma X_t) - \log(\tr(\exp(\log(\hat{\rho}_t)+\delta_t X_t)))
\end{align}
Applying Lemma \ref{lem:log_Zt} we get
\begin{align}
	D(\sigma||\hat{\rho}_t)-D(\sigma||\hat{\rho}_{t+1})&\ge \delta_t \tr(\sigma X_t) - \frac{\delta_t^2 }{2} -\delta_t\tr\left(\hat{\rho}_t X_t\right)\\
	&=-2\eta(\tr(\hat{\rho}_tX_t)-\hat{y}_t)(\tr(\sigma X_t)-\tr(\hat{\rho}_t X_t))-2\eta^2(\tr(\hat{\rho}_tX_t)-\hat{y}_t)^2\\
	&=-2\eta(\tr(\hat{\rho}_tX_t)-\hat{y}_t)(\tr(\sigma X_t)-\hat{y}_t+\hat{y}_t-\tr(\hat{\rho}_t X_t)) \nonumber \\
	&-2\eta^2(\tr(\hat{\rho}_tX_t)-\hat{y}_t)^2\\
	&=\left(2\eta-2\eta^2\right)(\tr(\hat{\rho}_tX_t)-\hat{y}_t)^2-2\eta(\tr(\sigma X_t)-\hat{y}_t)(\tr(\hat{\rho}_t X_t)-\hat{y}_t)\\
	&\ge\left(2\eta-2\eta^2\right)L_t(\hat{\rho}_t)-2\eta\sqrt{L_t(\hat{\rho}_t)L_t(\sigma)}\\
	&=\left(\sqrt{\eta-2\eta^2}\sqrt{L_t(\hat{\rho}_t)}-\sqrt{\frac{\eta^2}{\eta-2\eta^2}}\sqrt{L_t(\sigma)}\right)^2 \nonumber \\
	&+\eta L_t(\hat{\rho}_t)-\frac{\eta^2}{\eta-2\eta^2}L_t(\sigma).\end{align}
If we now choose $\eta-2\eta^2>0$, then the square roots in the last expression are real valued. As a result,
\begin{align}
	D(\sigma||\hat{\rho}_t)-D(\sigma||\hat{\rho}_{t+1})\ge \eta L_t(\hat{\rho}_t)-\frac{\eta}{1-2\eta}L_t(\sigma),
\end{align} 
and the learning factor $\eta$ must satisfy
\begin{align}
	0<\eta<\frac{1}{2},
\end{align}
which completes the proof of the lemma. To account for noiseless measurements, $\hat{y}_t$ is just replaced by $y_t$. 
\end{proof}


\tocless\subsection{Proof of Theorem \ref{lem:weak_conv_noisy}}
\label{appndx:lem:weak_conv_noisy}
\weakconv*
\begin{proof}
We know from Lemma \ref{lem:Lt_inequ} that,
\begin{align}
	\eta_t L_t(\hat{\rho}_t) - \frac{\eta_t}{1-2\eta_t} L_t(\rho) \le D(\rho||\hat{\rho}_t)-D(\rho||\hat{\rho}_{t+1}).
\end{align}
Taking the expectation with respect to $y_t$ followed by the the expectation with respect to $X_t$ we get,
\begin{align}
	\eta_t\E_t\{L_t(\hat{\rho}_t)\} -\frac{\eta_t}{1-2\eta_t}\E_t\{L_t(\rho)\} \le D(\rho||\hat{\rho}_t) - \E_t\{D(\rho||\hat{\rho}_{t+1})\}.
\end{align}
Applying Lemma \ref{lem:E_Lt_noisy}, we get
\begin{align}
	\eta_t\frac{d}{d^2-1}\left(\norm{\hat{\rho}_t-\rho}_F^2 + \frac{d-\norm{\rho}_F^2}{N}\right) - \frac{\eta_t}{1-2\eta_t} \frac{d}{d^2-1}\left(\frac{d-\norm{\rho}_F^2}{N}\right) \le D(\rho||\hat{\rho}_t) - \E_t\{D(\rho||\hat{\rho}_{t+1})\}.
\end{align}
Simplifying this expression and taking the expectation with respect to all previous time instants we get
\begin{align}
	\eta_t\E\{\norm{\hat{\rho}_t-\rho}_F^2\}  - \frac{2\eta_t^2}{1-2\eta_t}\left(\frac{d-\norm{\rho}_F^2}{N}\right) \le \frac{d^2-1}{d}\E\{D(\rho||\hat{\rho}_t) - D(\rho||\hat{\rho}_{t+1})\}.
\end{align}
The second term on the left hand side depends on the purity of the true state, and it can be bounded to become
\begin{align}
	\eta_t\E\{\norm{\hat{\rho}_t-\rho}_F^2\}  - \frac{2\eta_t^2}{1-2\eta_t}\left(\frac{d^2-1}{Nd}\right) \le \frac{d^2-1}{d}\E\{D(\rho||\hat{\rho}_t) - D(\rho||\hat{\rho}_{t+1})\}.
\end{align}
Selecting the learning rate to be 
\begin{align}
	\eta_t = \frac{1}{2}\frac{\E\{\norm{\hat{\rho}_t-\rho}_F^2\}}{\E\{\norm{\hat{\rho}_t-\rho}_F^2\}+2\left(\frac{d^2-1}{Nd}\right)},
\end{align}
then summing up the inequality over different time steps yields
\begin{align}
	\sum_{t=1}^{T}{\frac{1}{4}\frac{\E\left\{\norm{\hat{\rho}_t-\rho}_F^2\right\}^2}{\E\left\{\norm{\hat{\rho}_t-\rho}_F^2\right\} +2\left(\frac{d^2-1}{Nd}\right)}} &\le \frac{d^2-1}{d}\left(\E\{D(\rho||\hat{\rho}_1) - D(\rho||\hat{\rho}_{T+1})\}\right)\\
	&\le \frac{d^2-1}{d}D(\rho||\hat{\rho}_1),
\end{align}
where $\E\{D(\rho||\hat{\rho}_1)\}=D(\rho||\hat{\rho}_1)$ because $\hat{\rho}_1$ and $\rho$ are independent of $X_t$ and $\hat{y}_t$ for any $t$. Now taking the limit as $T\to\infty$ we get
\begin{align}
	\sum_{t=1}^{\infty}{\frac{\E\left\{\norm{\hat{\rho}_t-\rho}_F^2\right\}^2}{\E\left\{\norm{\hat{\rho}_t-\rho}_F^2\right\} +2\left(\frac{d^2-1}{Nd}\right)}} \le 4\left(\frac{d^2-1}{d}\right)D(\rho||\hat{\rho}_1).
\end{align}
Since, the left-hand side of the inequality is constant, then the series on the right hand side must converge. Consequently using the divergence test, 
\begin{align}
	\lim_{t\to\infty}{\frac{\E\left\{\norm{\hat{\rho}_t-\rho}_F^2\right\}^2}{\E\left\{\norm{\hat{\rho}_t-\rho}_F^2\right\} +2\left(\frac{d^2-1}{Nd}\right)}}=0.\label{equ:lim}
\end{align}
Assume that
\begin{align}
	\lim_{t\to\infty}{\E\left\{\norm{\hat{\rho}_t-\rho}_F^2\right\}=K>0},
\end{align}
then 
\begin{align}
	\lim_{t\to\infty}{\frac{\E\left\{\norm{\hat{\rho}_t-\rho}_F^2\right\}^2}{\E\left\{\norm{\hat{\rho}_t-\rho}_F^2\right\} +2\left(\frac{d^2-1}{Nd}\right)}}=\frac{K^2}{K+2\left(\frac{d-\norm{\rho}_F^2}{N}\right)}\ne 0,
\end{align}
which contradicts the condition in (\ref{equ:lim}). This means that it must be the case that 
\begin{align}
	\lim_{t\to\infty}{\E\left\{\norm{\hat{\rho}_t-\rho}_F^2\right\}}=0.
\end{align}
Now we can apply Lemma \ref{lem:prob} on the random variable $Z_t = \norm{\hat{\rho}_t-\rho}_F^2$  to conclude that 
\begin{align}
	\forall \delta>0, \lim_{t\to\infty}{\Pr\left\{\norm{\hat{\rho}_t-\rho}_F^2\le\delta\right\}}=1.
\end{align}
Therefore, the estimate $\hat{\rho}_t$ converges in probability to the true state $\rho$.
\end{proof}

\section{Overview on the diluted maximum likelihood method}

In this appendix an overview on the diluted maximum likelihood \cite{rehacek_diluted_2007} is given. The maximum likelihood method of quantum estimation is based on trying to find the state $\hat{\rho}$ that maximizes the log-likelihood function
\begin{align}
	\log(\mathcal{L})=\sum_j{f_j\log(\tr(\hat{\rho} \Pi_j))},
\end{align}
where $f_j$ is the relative frequency of outcome $j$ described by the POVM set $\{\Pi_j\}$. This can be achieved by doing iterations in the form
\begin{align}
	\hat{\rho}_{k+1} = R \hat{\rho}_k R,
\end{align}
where 
\begin{align}
	R=\sum_j{\frac{f_j}{\tr(\hat{\rho} \Pi_j)}\Pi_j}.
\end{align}
This is the $R\rho R$ algorithm. However, there is no guarantee that this form of update equation will generally converge. So, a modification on the form of the update equation is done to become, 
\begin{align}
	\hat{\rho}_{t+1} = \frac{(I+\epsilon R)\hat{\rho}_t(I+\epsilon R)}{\tr\left((I+\epsilon R)\hat{\rho}_t(I+\epsilon R)\right)}.
\end{align}
This is called the diluted maximum likelihood because it ``dilutes'' $R$ by mixing it with the identity operator $I$. The step parameter $\epsilon$ can be chosen arbitrarily and can be constant or adaptive. When $\epsilon\to\infty$, the iterations reverts back to the $R\rho R$ form. Choosing a value of $\epsilon\ll 1$ ensures that after each iteration the likelihood function is non-decreasing. On the other hand, a higher value of $\epsilon$ is better in terms of speed of convergence. A fewer number of iterations is required to achieve the same accuracy compared to a low value of $\epsilon$. 

An important thing to notice is that Pauli operators $\{X^{(i)}\}_{i=1}^{d^2-1}$ do not form a set of POVM. But fortunately it is easy to construct a set of POVM out of the Pauli operators, by taking all the ``up/down'' projectors normalized. So, the POVM set becomes $\{\frac{1}{d^2-1}\Pi_{\uparrow}^{(i)},\frac{1}{d^2-1}\Pi_{\downarrow}^{(i)}\}_{i=1}^{d^2-1}$.

Finally, maximum likelihood is a batch algorithm. So, the iterations are repeatedly run on a set of data. To modify this algorithm to become online, the iterations must be performed on the dataset after each new data point obtained. 

\section{Overview on the least-squares method}
\label{sec:ls}
This appendix gives a short brief on the least-squares method for quantum estimation. More details are given in \cite{qi_quantum_2013}. The basic idea is to construct a parametric model of the state in the form
\begin{align}
	\rho = \frac{I}{d} + \sum_{i=1}^{d^2-1}{\theta_i U_i},
\end{align} 
where the set $\{U_i\}_{i=1}^{d^2-1}$ are some Hermitian basis, and $\theta_i = \tr(\rho U_i)$ are the parameters. The problem of quantum tomography then becomes trying to estimate this parameter vector given the measurement dataset. So given a set of measurement operators represented using their ``up/down'' projectors in the form $\{\Pi_j\}_{j=1}^{2d^2-2}=\{\Pi_{\uparrow}^{(i)},\Pi_{\downarrow}^{(i)}\}_{i=1}^{d^2-1}$, the associated probabilities are
\begin{align}
	p_j = \tr(\Pi_j \rho) = \frac{1}{d} + \sum_{i=1}^{d^2-1}{\theta_i\tr(\Pi_j U_i)}.
\end{align}
These probabilities can be obtained experimentally but with some errors due to performing finite number of shots. The noisy data is denoted by $\hat{p}_j$. By defining the matrix expansion of the projectors in terms of the chosen basis $X_{i,j} = \tr(\Pi_i U_j)$, the dependent vector with components $Y_j=\hat{p}_j-\frac{1}{d}$, the model can be rewritten as 
\begin{align}
	Y=X\theta + e,
\end{align}
where $e$ is the error vector which converges to a normal distributed random variable at the limit of very large number of measurements. The optimal parameter is defined to minimize the sum of squared errors as
\begin{align}
	\hat{\theta}_{LS} = \text{argmin}_{\theta} (Y-X\theta)^T(Y-X\theta),
\end{align}
and the solution of this optimization problem is
\begin{align}
	\hat{\theta}_{LS} = (X^T X)^{-1}X^T Y.
\end{align} 
After the estimation of the unknown parameter, the quantum state is reconstructed as 
\begin{align}
	\hat{\rho} = \frac{I}{d} + \sum_{i=1}^{d^2-1}{\hat{\theta}_i U_i}.
\end{align}
The reconstructed state might be generally unphysical due to non-positivity of the estimate. So, in this case the state must be projected back to the physical space. One way to do is to redistribute the negative eigenvalues over all other eigenvalues, until there are no more negative eigenvalues. It can be shown \cite{smolin_efficient_2012} that this is an optimal projection method, in the sense that the projected state is nearest to the unphysical state in terms of the Frobenuis norm. This method is batch, but can be adapted to become online by doing the whole procedure of estimation and projection after each new data point obtained.

\section{Comparison with online projected-gradient descent}
\label{sec:pgd}

\begin{figure*}[t]

\centering
\subfloat[PGD with learning rate of 0.5]{\includegraphics[width=0.48\textwidth]{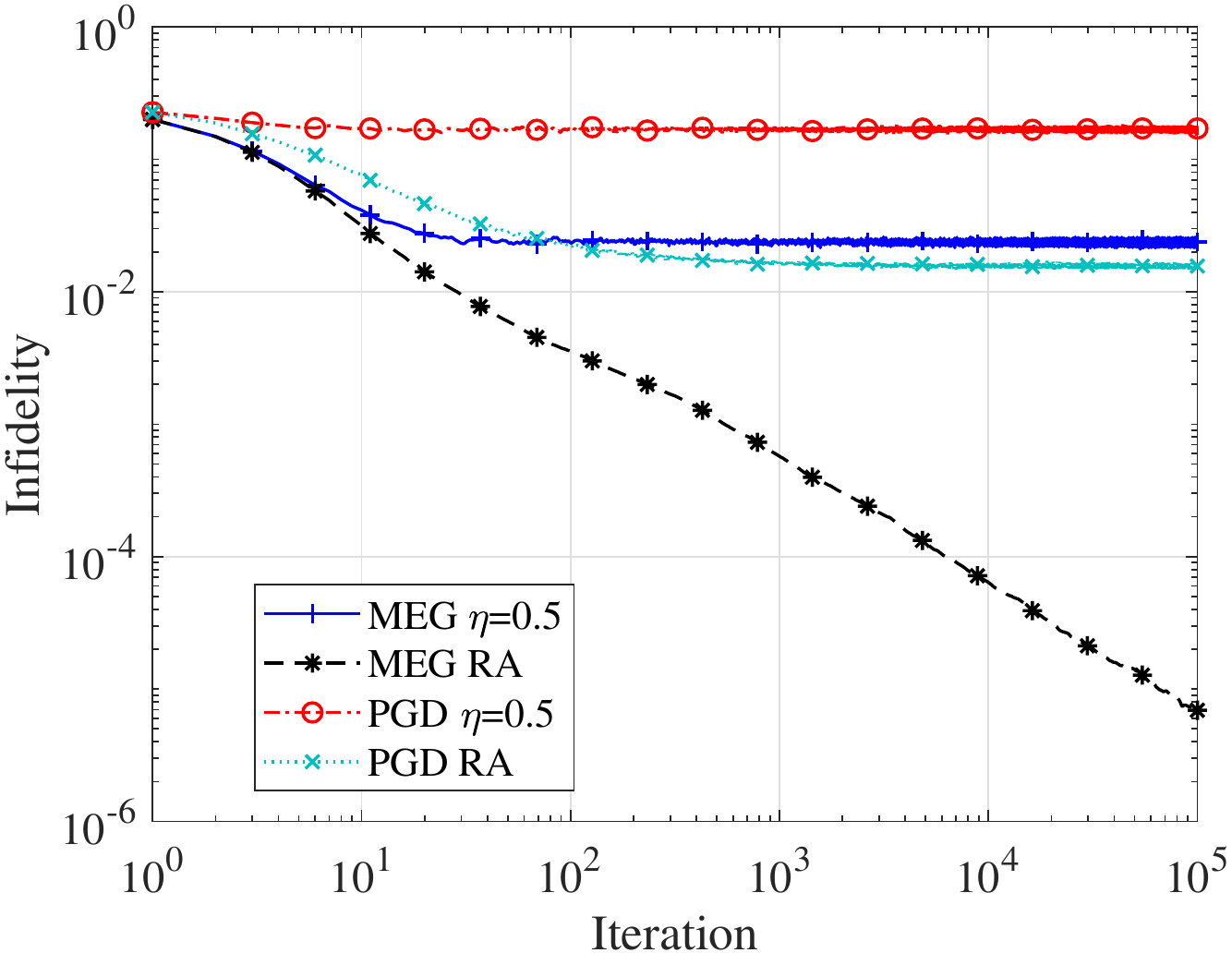} \label{fig:Results_PGD1} } 
\subfloat[PGD with learning rate of 0.05]{\includegraphics[width=0.48\textwidth]{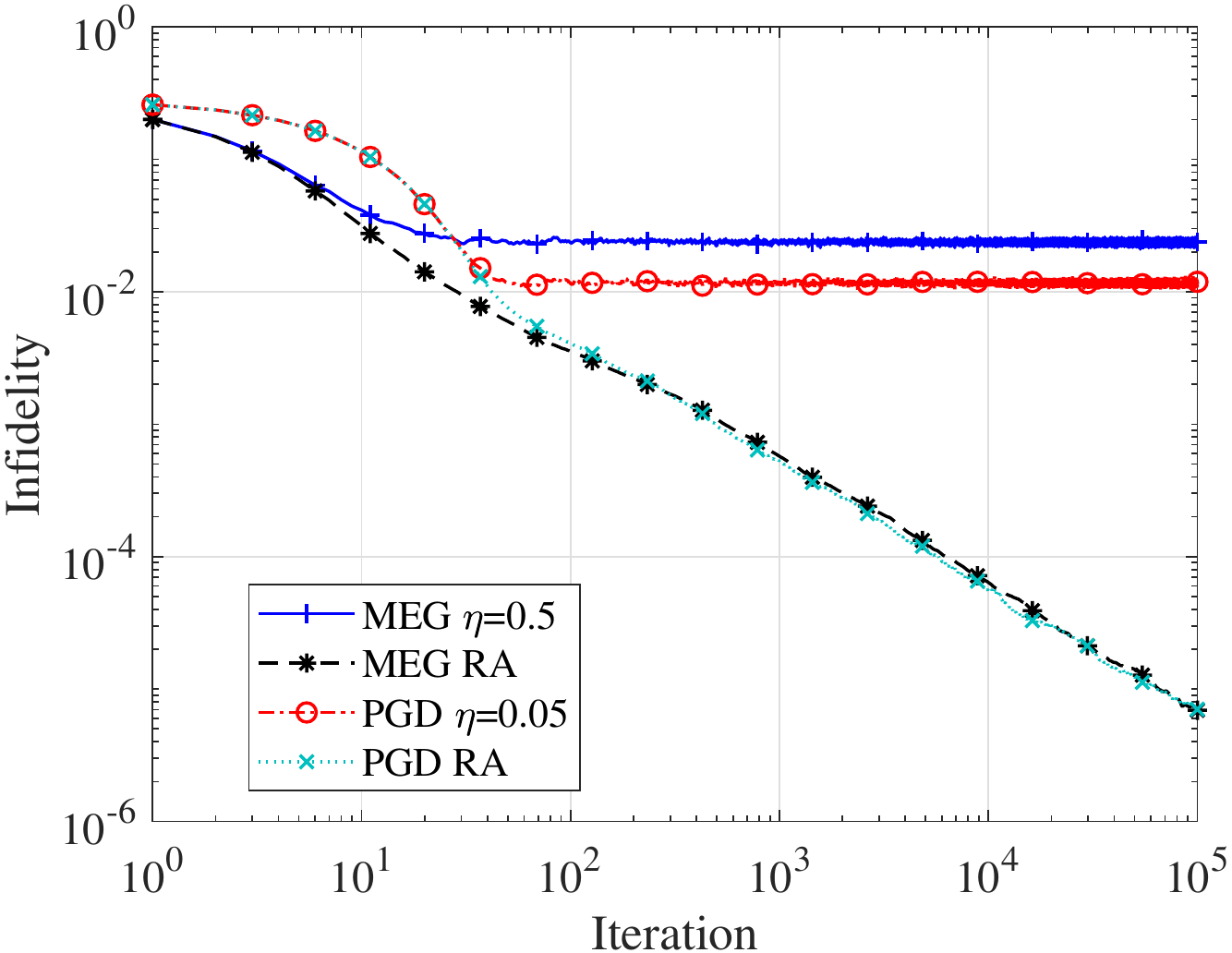} \label{fig:Results_PGD2}} 

\caption[]{Simulation results for comparing MEG with constant learning rate of 0.5 and MEG with running averages to the projected-gradient descent method (PGD) with constant learning rate as well as running averages for learning rate of (a) 0.5 and (b) 0.05. The infidelity is averaged over 1000 randomly generated quantum states and plotted versus the iteration number. The number of shots per measurement is taken to be 10 shots. }

\label{fig:Results_PGD}

\end{figure*}
The projected-gradient method (PGD) was proposed in \cite{bolduc_projected_2017} in the batch setting. We implemented this method in the online setting, and investigated its performance numerically in this case. The estimate at time iteration $t+1$ is given by
\begin{align}
	\hat{\rho}_{t+1} = \mathcal{P}\left(\hat{\rho}_t - \eta\nabla L_t(\hat{\rho}_t)\right),
\end{align}
where $\eta$ is the learning rate, $L_t$ is the loss function as defined previously, and $\mathcal{P}$ denotes projection into the physical space as discussed in Appendix \ref{sec:ls}. It turns out the performance is highly dependent on choosing the learning rate, and generally seems very similar to MEG when the learning rate is chosen low. In Figure \ref{fig:Results_PGD1}, we compared MEG with constant learning rate, MEG with running average (RA), PGD with constant learning rate and PGD with running average. The same value of $\eta=0.5$ was used in the four of them. This plot shows that the MEG method has better convergence. On the other hand, by changing the step size of the PGD methods to 0.05, we see that both methods seem to have similar performance for the running average case after significant number of iterations as shown in Figure \ref{fig:Results_PGD2}. This makes it very difficult to give a fair comparison with MEG. Note that even for MEG we were not particularly concerned with finding the optimal learning rate\,---\,we simply tested a few learning rates that are compatible with the limitations given by the convergence proof. However, faster (but not provable) convergence might be possible, as it often is, if we go outside that range. For PGD we simply do not know what the restrictions on the learning rate are so that the algorithm still provably converges with similar parameters as MEG. Thus, it will be interesting as a future work to look into the convergence of the PGD method. 

\bibliographystyle{alpha}
\bibliography{library,library_mt} 
\end{document}